\newtheorem{Theorem}{Theorem}
\newtheorem{Lemma}{Lemma}
\newtheorem{Definition}{Definition}
\newcommand{\Figwidth}{\columnwidth}%
\newcommand{\includeonetwocol}[2]{#2}
\def\twocolbreak{\nonumber\\ &}%
\def\twocolnewline{\nonumber\\}%
\def\twocolAlignMarker{&}%
\newcommand{\Figwidth}{4.5in}%
\newcommand{\includeonetwocol}[2]{#1}
\def\twocolbreak{}%
\def\twocolnewline{}%
\def\twocolAlignMarker{}%
\def\TreeSideInfo{{\tilde{\boldsymbol \tau}}}
\def\TreeLabels{{\boldsymbol \tau}}
\def\RootEstimator{{\hat{\tau}_0}}
\def\RootLabel{{\tau_0}}
\def\CroppedTree{{T^t}}
\def\LLRCroppedTree{{\Gamma_0^t}}
\def\TreeIter{{\hat{t}}}
\def\TreeDepth{{t}}
\begin{document}
\title{EXIT Analysis for Community Detection}

\author{Hussein Saad, {\em Student Member, IEEE}, and Aria Nosratinia, {\em Fellow, IEEE}
\thanks{The authors are with the Department of Electrical Engineering, University of Texas at
    Dallas, Richardson, TX 75083-0688 USA, E-mail:
    hussein.saad@utdallas.edu; aria@utdallas.edu.}
%\thanks{This work was presented in part at ISIT 2016~\protect\cite{Saad:ISIT2016}.}
%This work was supported in part by the grants CIF1219065 and CIF1527598 from the National Science Foundation.}
}
\maketitle

% ----------------------------------------------------------------- %

\begin{abstract}
This paper employs the extrinsic information transfer (EXIT) method, a technique imported from the analysis of the iterative decoding of error control codes, to study the performance of belief propagation in community detection in the presence of side information. We consider both the detection of a single (hidden) community, as well as the problem of identifying two symmetric communities. For single community detection, this paper demonstrates the suitability of EXIT to predict the asymptotic phase transition for weak recovery. More importantly, EXIT analysis is leveraged to produce useful insights such as the performance of belief propagation near the threshold. For two symmetric communities, the asymptotic residual error for belief propagation is calculated under finite-alphabet side information, generalizing a previous result with noisy labels. EXIT analysis is used to illuminate the effect of side information on community detection, its relative importance depending on the correlation of the graphical information with node labels, as well as the effect of side information on residual errors.

%Moreover, we demonstrate the suitability of EXIT analysis for deriving insights on the effect of side information when the graph available for inference is highly/low informative and the effect of the quality and quantity of side information on the residual error.
%This paper proposes the extrinsic information transfer (EXIT) method to study the performance of belief propagation in community detection under side information. The EXIT technique was introduced for the analysis of turbo codes and LDPC codes. We consider the stochastic block model for one community and two symmetric communities. For single community detection, we demonstrate the suitability of EXIT analysis to predict whether belief propagation experiences a phase transition, and where is it. Furthermore, we show the power of EXIT analysis to predict insights that are hard to predict through analysis, such as the performance of belief propagation near the threshold. For the two symmetric communities, we calculate the asymptotic residual error for belief propagation with finite-alphabet side information, generalizing a result in the literature. Moreover, we demonstrate the suitability of EXIT analysis for deriving insights on the effect of quality and quantity of side information on the performance of belief propagation in terms of residual error, performance through the first few iterations, and achieving weak recovery. 
\end{abstract}

\IEEEpeerreviewmaketitle

%=====================================================================================================
%=====================================================================================================

\section{Introduction}
\label{intro}

Detecting communities in graphs is a fundamental problem that has been studied in various fields including statistics~\cite{SBM_first,Min_max_SBM,stat1,stat2,stat3}, computer science~\cite{SCT,CS1,CS2,CS3,CS4} and theoretical statistical physics~\cite{Phys,Phys2}. Applications of community detection include finding like-minded people in social networks~\cite{social}, improving recommendation systems~\cite{recommendation}, and detecting protein complexes~\cite{protein}.

Several models are proposed for random graphs that exhibit a community structure; a survey can be found in~\cite{survey}. This paper considers the stochastic block model (SBM)~\cite{exact_general_sbm}, which is widely used as a model for community detection and as a benchmark for clustering algorithms~\cite{SBM_first,Min_max_SBM,SCT,Phys}. 
This paper addresses two models: the stochastic block model for one community~\cite{infor_limits,recovering_one}, and the binary symmetric stochastic block model~\cite{Main_paper2,exact_sbm}.  Most of the community detection literature recovers the communities from a purely graphical observation. However, in practical scenarios, extra information (side information) about the nodes might also be available. For example, social networks such as Facebook and Twitter have access to information other than the graph edges such as date of birth, nationality, school. Therefore, in this paper we consider a generalization of the standard community detection problem, where in addition to the connectivity graph, some side information related to each node's individual attributes is available for inference.

In the area of community detection, some works~\cite{exact_general_sbm,exact_sbm,Saad:JSTSP18,Saad:single,ISIT-2018-2,our2} have concentrated on deriving information theoretic limits, while some others~\cite{Main_paper2,Kadavankandy:SingleCommunity,MoselXu:ACM16,Cai:BP-BEC,ISIT-2018-1} proposed efficient algorithms such as belief propagation and semi-definite programming and study their asymptotic performance. 
%In statistics, several works have appeared on model-matching to real data consisting of both graphical and non-graphical observations\cite{side1,side2,side3}. 
This paper proposes a new tool\footnote{While EXIT analysis has been used in the context of error control codes and related subjects in communication, its introduction to the area of community detection is novel. Please note that EXIT charts~\cite{johnson} predate, and are unrelated to~\cite{EXIT_not_real} which also uses the acronym "EXIT".} for the analysis of the performance of local message passing algorithms, e.g., belief propagation, for community detection with side information. EXIT analysis has been used to understand the behavior of iterative algorithms~\cite{johnson} in the context of error control and communication systems. Instead of calculating and tracking the probability density of the estimate or its log likelihood  (density evolution) which can be complicated, EXIT analysis tracks the evolution of the mutual information (a scalar value) at each iteration of the algorithm. EXIT analysis has the advantage of modest complexity (compared with density evolution), robustness to approximation errors, and production of useful insights~\cite{johnson}. By observing the EXIT chart, one can predict whether the decoder will fail, deduce the approximate number of iterations needed to decode, as well as approximate error probability after decoding. EXIT charts also have the additional benefit of an information theoretic interpretation~\cite{johnson}.

%In this paper, we propose EXIT analysis to study the performance of belief propagation for community detection with side information, under the one community SBM as well as the binary symmetric SBM. We show that through EXIT charts, many insights regarding the belief prorogation can be deduced that are hard to deduce from the theoretical analysis, such as:
We apply EXIT analysis to single-community detection as well as to binary symmetric community detection, each with side information, and leverage this technique to provide insights on:
\begin{itemize}

\item The effect of the quality and quantity of side information on the performance of belief propagation, e.g. probability of error

\item The asymptotic threshold for weak recovery, achieving a vanishing residual error

\item The performance of belief propagation near the optimal threshold

\item The performance of belief propagation through the first few iterations

\item Approximating the number of iterations needed for convergence

\end{itemize}

The technical distinction and novelty of this paper can be explained as follows: EXIT analysis was originally developed in the context of communication systems for bipartite graphs in which some nodes carry information while some other nodes represent the constraints on the data nodes (e.g. via parity check equations or the structure or memory of a communication channel). The present work aims to employ EXIT analysis in a scenario where the above conditions do not apply, and therefore the EXIT analysis must be developed anew for the scenario where each node in a general tree has both an individual label (information) as well as information that is applicable to other nodes. This gives rise to new EXIT equations. In other words, in the original EXIT analysis, all mutual information was calculated with respect to a subset of node labels, i.e., bit-node variables, whereas now all nodes have information. Since we are now interested in a graph that has a stochastic symmetry, the input/output belief propagation equations must be reinterpreted once again in terms of extrinsic information. This statement will be further clarified in the sequel while developing the details of EXIT equations.

An early version of this work, without considering side information and using a subset of graph system models considered herein, appeared in the following conference paper~\cite{Saad:ISIT2016}.
%%=====================================================================================================

\section{System Models}\label{sys.}

Throughout this paper the community label of node $i$ is denoted by $x_i$, the side information of node $i$ by $y_i$, the vector of the nodes true labels by $\boldsymbol{x}^{*}$, the vector of the nodes side information by $\boldsymbol{y}$, and the observed graph by $G$. We assume that conditioned on $\boldsymbol{x}$, $G$ and $\boldsymbol{y}$ are independent. The goal is to recover $\boldsymbol{x}^{*}$ from the observation of $\boldsymbol G$ and $\boldsymbol y$. The alphabet for $y_i$ is denoted with $\{u_{1}, u_{2}, \cdots, u_{M}\}$, where $M$ is the cardinality of side information which is assumed to be bounded and constant across $n$.

Two system models are considered in this paper. The first, the binary symmetric stochastic block model, consists of $n$ nodes with $x_{i} \in \{\pm1\}$. The node labels are independent
and identically distributed across $n$, with $1$ and $-1$ labels having equal probability. Each two nodes are connected with an edge with probability $\frac{a}{n}$ if the two nodes belong to the same community and with probability $\frac{b}{n}$, otherwise, for $a > b >0$. In addition to the graph, each node independently observes side information, $y_{i}$, according to:
\begin{align}
\alpha_{+,m} \triangleq \mathbb{P}(y_{i} = u_{m} | x_{i} = 1) \\
\alpha_{-,m} \triangleq \mathbb{P}(y_{i} = u_{m} | x_{i} = -1) 
\end{align}

It is further assumed that as $n \to \infty$: $a,b \to \infty$ such that $\frac{a-b}{\sqrt{b}} = \mu$, for a fixed positive constant $\mu$ and that the average degree $\frac{(a+b)}{2} = n^{o(1)}$. The latter condition is crucial in our analysis, by enabling the approximation of the neighborhood of a given node in the graph by a tree~\cite{Main_paper1,Main_paper2}.

The second model studied in this paper is the one-community stochastic block model, consisting of $n$ nodes and containing a hidden community $C^{*}$ with size $|C^{*}| = K$. Let $x_{i} = 1$ if $i \in C^{*}$ and $x_{i} = 0$ if $i \notin C^{*}$. The underlying distribution of the graph is as follows:  an edge connects a pair of nodes with probability $p$ if both nodes are in $C^{*}$ and with probability $q$ otherwise, with $p\geq q$. For each node $i$, side information $y_{i}$ is observed according to the distribution:
\begin{align}
\alpha_{+,m} \triangleq \mathbb{P}(y_{i} = u_{m} | x_{i} = 1) \\
\alpha_{-,m} \triangleq \mathbb{P}(y_{i} = u_{m} | x_{i} = 0) 
\end{align} 

Define 
\begin{equation}
\lambda \triangleq \frac{K^{2}(p-q)^{2}}{(n-K)q}.
\end{equation}
We assume $\frac{K}{n}$, the LLR of side information and $\lambda$ are constants independent of $n$, while $nq, Kq \overset{n\rightarrow\infty}{\xrightarrow{\hspace{0.2in}}} \infty$, which implies that $\frac{p}{q} \overset{n\rightarrow\infty}{\xrightarrow{\hspace{0.2in}}} 1$. Furthermore,  $np = n^{o(1)}$.

%%=====================================================================================================

\section{Binary Symmetric Stochastic Block Model}\label{Bel.}

Studying the performance of belief propagation with noisy-label side information was introduced in~\cite{Main_paper2}. This section generalizes the results to M-ary side information and introduces EXIT analysis as a new tool to study the performance of belief propagation for community detection.  A key idea in our analysis is the relation between inference on graphs and inference on the corresponding Galton-Watson trees~\cite{Main_paper2}. 

\begin{Definition}
For a node $i$, let $(T_{i},\tau,\tilde{\tau})$ be a Poisson two-type branching process tree rooted at $i$, where $\tau$ is a $\pm1$ labeling of  nodes in $T_{i}$. Let $\tau_{i}$ be chosen uniformly at random from $\{\pm1\}$. Each node $j$ in $T_{i}$ will have $L_{j} \sim \text{Pois}(\frac{a}{2})$ children with label $\tau_{j}$ and $M_{j}\sim \text{Pois}(\frac{b}{2})$ children with label $-\tau_{j}$. Finally, for each node $j$, an $M$-ary side information $\tilde{\tau}_{j}$ is observed according to the conditional distributions $\alpha_{+,m}$ and $\alpha_{-,m}$.
\end{Definition}

Let $T_{j}^{t}$ be the sub-tree of $T_{i}$ rooted at node $j$ with depth $t$. The problem of inference on trees with side information is to estimate the label of the root $\tau_{i}$ given observation of $(T_{i}^{t},\tilde{\tau}_{T_{i}^{t}})$, where $\tilde{\tau}_{T_{i}^{t}}$ is the side information of all the nodes in the tree rooted at $i$ with depth $t$. It then follows that the error probability for an estimator $\hat{\tau}_{i}(T_{i}^{t},\tilde{\tau}_{T_{i}^{t}})$ is: 
\begin{align*}
q_{T^{t}} &= \frac{1}{2} \mathbb{P}(\hat{\tau}_{i} \\
&= 1 | \tau_{i} = -1) + \frac{1}{2} \mathbb{P}(\hat{\tau}_{i} = -1 | \tau_{i} = 1).
\end{align*}
Let $q^{*}_{T^{t}}$ be the error probability achieved by the optimal estimator, i.e. maximum a posteriori (MAP). Note that the MAP estimator for any node $i$ can be written as: $\hat{\tau}_{MAP} = 2 \times 1_{\{\Gamma_{i}^{t} \geq 0\}} -1$, where $\Gamma_{i}^{t}$ is the log likelihood ratio and can be defined as:
\begin{equation}\label{eq.1}
\Gamma_{j}^{t} = \frac{1}{2} \log\Bigg( \frac{\mathbb{P}(T_{j}^{t},\tilde{\tau}_{T_{j}^{t}} | \tau_{j} = 1)}{\mathbb{P}(T_{j}^{t},\tilde{\tau}_{T_{j}^{t}} | \tau_{j} = -1)} \Bigg)
\end{equation}
$\forall j \in T_{i}$. The log likelihood ratio $\Gamma_{j}^{t}$ can be further computed via a recursive formula which is the basis for the belief propagation algorithm.

\begin{Lemma}
\label{Le.1}
Let  ${\mathcal N}_{j}$ denote the children of node $j$, $N_j\triangleq|{\mathcal N}_j|$, $\beta = \frac{1}{2}\log(\frac{a}{b})$ and $h_{j} \triangleq \frac{1}{2} \log\big(\frac{\mathbb{P}(\tilde{\tau}_{j} | \tau_j =1)}{\mathbb{P}(\tilde{\tau}_{j} | \tau_j = -1)}\big)$. Then, for all $t\geq 1$, 
\begin{align}
\Gamma_{j}^{t} = h_j + \frac{1}{2} \sum_{k \in {\mathcal N}_{j}} \log\Bigg(\frac{1 + e^{2\beta + 2\Gamma_{k}^{t-1}} }{e^{2\beta} + e^{2\Gamma_{k}^{t-1}}}\Bigg)
\end{align}
\end{Lemma}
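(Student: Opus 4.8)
The plan is to establish the recursion directly from the conditional-independence structure of the branching-process tree; no induction is needed, since the claim only relates $\Gamma_j^t$ to the $\Gamma_k^{t-1}$ of the children. Fix a node $j$ with children $\mathcal{N}_j$, and for each child $k$ write $o_k \triangleq (T_k^{t-1}, \tilde{\tau}_{T_k^{t-1}})$ for its depth-$(t-1)$ subtree observation. Given the root label $\tau_j$, the side information $\tilde{\tau}_j$ depends only on $\tau_j$, and the child observations $o_k$ are conditionally independent of one another and of $\tilde{\tau}_j$. The two-type Poisson construction further makes the total offspring count $\text{Pois}(\tfrac{a+b}{2})$ regardless of $\tau_j$, so the associated structural normalization is identical under $\tau_j = 1$ and $\tau_j = -1$ and cancels in the likelihood ratio. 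Hence I would first record the factorization
\begin{equation}
\frac{\mathbb{P}(T_j^t, \tilde{\tau}_{T_j^t}\mid \tau_j = 1)}{\mathbb{P}(T_j^t, \tilde{\tau}_{T_j^t}\mid \tau_j = -1)} = \frac{\mathbb{P}(\tilde{\tau}_j\mid \tau_j=1)}{\mathbb{P}(\tilde{\tau}_j\mid \tau_j=-1)}\prod_{k\in\mathcal{N}_j}\frac{\mathbb{P}(o_k\mid \tau_j=1)}{\mathbb{P}(o_k\mid \tau_j=-1)}.
\end{equation}

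Next I would evaluate a single child factor by marginalizing over its hidden label $\tau_k$. By the coloring property of the offspring Poisson process, conditioned on $\tau_j = 1$ a given child carries label $+1$ with weight $\tfrac{a}{2}$ and label $-1$ with weight $\tfrac{b}{2}$ (the two weights exchange roles for $\tau_j = -1$). Setting $P_k^{+} \triangleq \mathbb{P}(o_k\mid \tau_k = 1)$ and $P_k^{-} \triangleq \mathbb{P}(o_k\mid \tau_k = -1)$, this gives
\begin{equation}
\frac{\mathbb{P}(o_k\mid \tau_j = 1)}{\mathbb{P}(o_k\mid \tau_j = -1)} = \frac{\tfrac{a}{2}P_k^{+} + \tfrac{b}{2}P_k^{-}}{\tfrac{b}{2}P_k^{+} + \tfrac{a}{2}P_k^{-}} = \frac{a\, e^{2\Gamma_k^{t-1}} + b}{b\, e^{2\Gamma_k^{t-1}} + a},
\end{equation}
where the second equality divides through by $P_k^{-}$ and uses $P_k^{+}/P_k^{-} = e^{2\Gamma_k^{t-1}}$, which is immediate from the definition $\Gamma_k^{t-1} = \tfrac{1}{2}\log(P_k^{+}/P_k^{-})$.

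The remaining step is algebraic: substituting $a/b = e^{2\beta}$ rewrites the right-hand side of the child factor as $(1 + e^{2\beta + 2\Gamma_k^{t-1}})/(e^{2\beta} + e^{2\Gamma_k^{t-1}})$. Taking $\tfrac{1}{2}\log$ of the factorized ratio and recognizing the leading term as $h_j$ by its definition then yields the claimed identity $\Gamma_j^t = h_j + \tfrac{1}{2}\sum_{k\in\mathcal{N}_j}\log\big((1 + e^{2\beta + 2\Gamma_k^{t-1}})/(e^{2\beta} + e^{2\Gamma_k^{t-1}})\big)$. I expect the main obstacle to lie entirely in the first step: correctly invoking the marked/colored Poisson structure to argue both that the offspring-count normalization is label-independent and therefore cancels, and that $\tfrac{a}{2}$ and $\tfrac{b}{2}$ are the right mixing weights for the hidden child label. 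Once this probabilistic decomposition is justified, the rest is routine manipulation.
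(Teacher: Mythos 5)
Your proof is correct and follows essentially the same route as the paper: factorize the likelihood ratio using conditional independence given $\tau_j$, note that the offspring count is $\mathrm{Pois}(\frac{a+b}{2})$ under either label so its contribution cancels, marginalize each child factor over the hidden label $\tau_k$ with weights proportional to $a$ and $b$, and finish with the substitution $a/b=e^{2\beta}$. The only cosmetic difference is that you phrase the mixing weights as the two-type intensities $\frac{a}{2},\frac{b}{2}$ while the paper uses the equivalent thinned probabilities $\frac{a}{a+b},\frac{b}{a+b}$; the common factor cancels in the ratio either way.
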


\begin{proof}

\begin{align}
\Gamma_{j}^{t} =& \frac{1}{2} \log\Bigg( \frac{\mathbb{P}(T_{j}^{t},\tilde{\tau}_{T_{j}^{t}} | \tau_{j} = 1)}{\mathbb{P}(T_{j}^{t},\tilde{\tau}_{T_{j}^{t}} | \tau_{j} = -1)} \Bigg) \nonumber \\ 
 \overset{(a)}{=}& \frac{1}{2} \log\Bigg( \frac{\mathbb{P}\big( N_{j},\tilde{\tau}_{j} | \tau_{j} =1\big)}{\mathbb{P}\big( N_{j},\tilde{\tau}_{j} | \tau_{j} =-1\big)} \Bigg) \twocolbreak+  \log\Bigg( \frac{\prod_{k \in {\mathcal N}_{j}} \mathbb{P}\big(T_{k}^{t-1},\tilde{\tau}_{T_{k}^{t-1}} | \tau_{j} = 1\big)}{\prod_{k \in {\mathcal N}_{j}} \mathbb{P}\big(T_{k}^{t-1},\tilde{\tau}_{T_{k}^{t-1}} | \tau_{j} = -1\big)} \Bigg) \nonumber \\
 \overset{(b)}{=}& \frac{1}{2} \log\Bigg( \frac{\mathbb{P}\big( N_{j} | \tau_{j} =1\big)}{\mathbb{P}\big( N_{j} | \tau_{j} =-1\big)} \Bigg) + \frac{1}{2} \log\Bigg( \frac{\mathbb{P}\big( \tilde{\tau}_{j} | \tau_{j} =1\big)}{\mathbb{P}\big( \tilde{\tau}_{j} | \tau_{j} =-1\big)} \Bigg) + \nonumber \\ 
&  \frac{1}{2} \sum_{k \in {\mathcal N}_{j}} \log\Bigg( \frac{ \sum_{\tau_{k} \in \{\pm1\} } \mathbb{P}\big( T_{k}^{t-1},\tilde{\tau}_{T_{k}^{t-1}} | \tau_{k} \big) \mathbb{P}\big( \tau_{k}|\tau_{j}=1 \big) }{\sum_{\tau_{k} \in \{\pm1\} }  \mathbb{P}\big( T_{k}^{t-1},\tilde{\tau}_{T_{k}^{t-1}} | \tau_{k} \big) \mathbb{P}\big( \tau_{k}|\tau_{j}=-1 \big)} \Bigg) \nonumber \\ 
 \overset{(c)}{=}& h_j +  \twocolbreak\frac{1}{2} \sum_{k \in {\mathcal N}_{j}} \log\Bigg( \frac{ a \mathbb{P}\big( T_{k}^{t-1},\tilde{\tau}_{T_{k}^{t-1}} | \tau_{k} =1 \big) + b \mathbb{P}\big( T_{k}^{t-1},\tilde{\tau}_{T_{k}^{t-1}} | \tau_{k} =-1 \big)}{b \mathbb{P}\big( T_{k}^{t-1},\tilde{\tau}_{T_{k}^{t-1}} | \tau_{k} =1 \big) + a \mathbb{P}\big( T_{k}^{t-1},\tilde{\tau}_{T_{k}^{t-1}} | \tau_{k} =-1 \big)} \Bigg) \nonumber \\
 \overset{(d)}{=} & h_j  + \frac{1}{2} \sum_{k \in {\mathcal N}_{j}} \log\Big(\frac{1 + e^{2\beta + 2\Gamma_{k}^{t-1}} }{e^{2\beta} + e^{2\Gamma_{k}^{t-1}}}\Big)  \nonumber
\end{align}
\begin{itemize}
\item $(a)$ holds because conditioned on $\tau_{j}$, $(N_{j},\tilde{\tau}_{j})$ are independent of the rest of the tree, and $(T_{k}^{t-1},\tilde{\tau}_{T_{k}^{t-1}})$ are independent and identically distributed random variables $\forall k \in N_{j}$,  
\item $(b)$ holds also because conditioned on $\tau_{j}$, $N_{j}$ and $\tilde{\tau}_{j}$ are independent, 
\item $(c)$ holds because $N_{j} \sim \text{Pois}(\frac{a+b}{2})$ $\forall j \in T^{t}$, and for a node $j$, $N_{j}$ children are generated $\sim \text{Pois}(\frac{a+b}{2})$, then for each node $k \in N_{j}$, $\tau_{k} = \tau_{j}$ with probability $\frac{a}{a+b}$ and $\tau_{k} = -\tau_{j}$ with probability $\frac{b}{a+b}$, 
\item $(d)$ holds from the definition of $\beta$.
\end{itemize}
\end{proof}
The above result clarifies the connection between inference on trees and the community detection problem addressed in this paper. Let $G_{i}^{t}$ be the sub-graph of $G$ induced by the nodes whose distance to $i$ is at most $t$, and $\boldsymbol{x}_{A}$ be a vector consisting of labels of nodes in a set of nodes A. Then, the following Lemma, proved in~\cite{Main_paper2}, shows the feasibility of approximating $(G_{i}^{t},\boldsymbol{x}_{G_{i}^{t}},\boldsymbol{y}_{G_{i}^{t}})$ by $(T_{i}^{t},\tau_{T_{i}^{t}},\tilde{\tau}_{T_{i}^{t}})$ with probability approaching one under certain conditions on the depth $t$.

\begin{Lemma}[\cite{Main_paper2}]
\label{Le.2}
For $ t= t(n)$ such that $(\frac{a+b}{2})^{t} = n^{o(1)}$, there exists a coupling between $(G,\boldsymbol{x},\boldsymbol{y})$ and $(T,\tau,\tilde{\tau})$ such that $(G_{i}^{t},\boldsymbol{x}_{G_{i}^{t}},\boldsymbol{y}_{G_{i}^{t}}) = (T_{i}^{t},\tau_{T_{i}^{t}},\tilde{\tau}_{T_{i}^{t}})$ with probability converging to $1$.
\end{Lemma}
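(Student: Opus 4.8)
The plan is to prove this by a breadth-first exploration of the neighborhood of node $i$ in $G$, coupled generation-by-generation to the two-type Poisson branching process of the Definition. Concretely, I would reveal the vertices at distance $1, 2, \ldots, t$ from $i$ sequentially; at each newly discovered vertex $j$ with label $x_j$, I examine which of the as-yet-unexplored vertices attach to it. The coupling succeeds as long as two bad events are avoided: (i) the offspring counts in $G$ fail to match the Poisson offspring $(L_j, M_j)$ of $T$, and (ii) the exploration revisits a previously discovered vertex, creating a cycle that the tree cannot have. I would show both events have probability $o(1)$ under the stated scaling, and that on their complement the two objects coincide exactly.

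For the offspring matching, fix a vertex $j$ at generation $s < t$ with label $x_j$. Among the roughly $n/2$ vertices carrying the same label as $j$ that remain unexplored, each attaches to $j$ independently with probability $a/n$, so the number of same-label children is $\mathrm{Binomial}(\approx n/2, a/n)$; similarly the opposite-label children follow $\mathrm{Binomial}(\approx n/2, b/n)$. Because $a = n^{o(1)}$ and $b = n^{o(1)}$, the success probabilities tend to $0$, and the total variation distance between each Binomial and its Poisson limit $\mathrm{Pois}(\frac{a}{2})$, $\mathrm{Pois}(\frac{b}{2})$ is $O(a^2/n)$ and $O(b^2/n)$ respectively (Le Cam / law-of-rare-events bound). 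These limits exactly match the branching rule $L_j \sim \mathrm{Pois}(\frac{a}{2})$, $M_j \sim \mathrm{Pois}(\frac{b}{2})$ of the Definition, so the offspring distributions couple with error $O((a^2+b^2)/n)$ per vertex.

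For the cycle-free event, the key is that the number of vertices discovered up to depth $t$ is of order $(\frac{a+b}{2})^t = n^{o(1)}$ with high probability. Whenever the exploration seeks a new child, the probability that the chosen candidate is one of the at most $n^{o(1)}$ already-revealed vertices is $O(n^{o(1)}/n)$. A union bound over all $n^{o(1)}$ edges examined gives total collision probability $O(n^{2 \cdot o(1)}/n) = o(1)$, so the explored subgraph is a tree with probability converging to $1$. Summing the per-vertex offspring-coupling error over the $n^{o(1)}$ explored vertices likewise yields a cumulative error $O\big(n^{o(1)}(a^2+b^2)/n\big) = o(1)$, since $a, b = n^{o(1)}$.

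Finally, the labels and side information ride along for free: in both $G$ and $T$ the label of a child relative to its parent follows the same conditional law (same label with weight $a$, opposite with weight $b$, i.e. probabilities $\frac{a}{a+b}$ and $\frac{b}{a+b}$), and conditioned on each label the side information $y_j = \tilde{\tau}_j$ is drawn from the identical distributions $\alpha_{+,m}, \alpha_{-,m}$; hence $\boldsymbol{x}$ and $\boldsymbol{y}$ couple to $\tau$ and $\tilde{\tau}$ automatically on the event that the graph structure couples. Intersecting the two bad-event complements and letting $n \to \infty$ yields $(G_i^t, \boldsymbol{x}_{G_i^t}, \boldsymbol{y}_{G_i^t}) = (T_i^t, \tau_{T_i^t}, \tilde{\tau}_{T_i^t})$ with probability tending to $1$. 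I expect the main obstacle to be the control of error accumulation over the exploration: both the collision probability and the summed Poisson-approximation error scale with the total number of explored vertices, and it is precisely the hypothesis $(\frac{a+b}{2})^t = n^{o(1)}$ that keeps this count sub-linear in $n$ and forces both contributions to vanish.
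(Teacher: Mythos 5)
The paper does not prove this lemma at all; it is imported verbatim from the cited reference (Mossel--Xu), so there is no in-paper proof to compare against. Your sketch is the standard exploration-process argument used in that reference: breadth-first revelation of the depth-$t$ neighborhood, Le Cam/Poisson approximation of the Binomial offspring counts, a union bound showing no edge closes a cycle among the $n^{o(1)}$ revealed vertices, and the observation that labels and side information transfer automatically because the child-label law ($a/(a+b)$ versus $b/(a+b)$) and the conditional law of $y_j$ given $x_j$ coincide in the two models. The skeleton is correct. Two details you wave at would need to be made rigorous: (i) the claim that the explored set has size $n^{o(1)}$ \emph{with high probability} (not just in expectation) requires its own argument, e.g.\ Markov's inequality applied to the total progeny $\sum_{s\le t}(\frac{a+b}{2})^s$ with a slowly diverging factor; and (ii) the candidate pool for children of a vertex is not exactly $n/2$ per label --- the community sizes fluctuate by $O(\sqrt{n})$ and the pool is depleted by the already-explored vertices --- so the Binomial parameter differs from $n/2$ by $O(\sqrt{n}+n^{o(1)})$, contributing an extra total-variation error of order $a/\sqrt{n}$ per vertex that must also be summed over the exploration. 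Both contributions vanish under the hypotheses $a,b=n^{o(1)}$ and $(\frac{a+b}{2})^t=n^{o(1)}$, so neither is fatal, but as written they are gaps in an otherwise correct outline.
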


Lemma~\ref{Le.2} suggests that the tree-based log likelihood ratio $\Gamma_{i}^{t}$, calculated in Lemma~\ref{Le.1},  is an asymptotically accurate representation for belief propagation in our problem. Let $\hat{\boldsymbol{x}}_{BP^{t}}$ be the output of the belief propagation algorithm after t iterations. The details of the belief propagation algorithm is presented in Table~\ref{tab}.

\begin{table}
\caption{Belief propagation algorithm with side information.}
\label{tab}
\centering
\begin{tabular}{|p{3in}|}
\hline
Belief Propagation Algorithm \\
\hline
1: Input: $n, t\in\mathbb{N}$, $G$, $\boldsymbol{y}$.\\

2: Initialize: Set $R^{0}_{i\to j} = 0$, $\forall i \in G$ and $j \in \mathcal{N}_{i}$.\\

3: For all $i \in G$ and $j \in \mathcal{N}_{i}$, run for $t-1$ iterations:\\

$R^{t-1}_{i\to j} = h_i + \sum_{k \in \mathcal{N}_{i} \backslash\{j\}} \log\Big(\frac{1 + e^{2\beta + 2R^{t-2}_{k\to i}} }{e^{2\beta} + e^{2R^{t-2}_{k\to i}}}\Big)$
\\

4: For all $i \in G$, compute:\\

$R^{t}_{i} = h_i + \sum_{k \in \mathcal{N}_{i}} \log\Big(\frac{1 + e^{2\beta + 2R^{t-1}_{k\to i}} }{e^{2\beta} + e^{2R^{t-1}_{k\to i}}}\Big)$\\

5: Return $\hat{\boldsymbol{x}}_{BP^{t}}$ with $\hat{\boldsymbol{x}}_{BP^{t}}(i) = 2 \times 1_{\{R^{t}_{i} \geq 0\}} -1$.\\
\hline

\end{tabular}

\end{table}

Define 
\[
p_{G,\boldsymbol{y}}(\hat{\boldsymbol{x}}) \triangleq \frac{1}{n} \sum_{i=1}^{n} \mathop{\mathbb{P}}\{x_{i} \neq \hat{x_{i}}\}
\]
to be the expected fraction of misclassified nodes by an estimator $\hat{\boldsymbol{x}}$. The following lemma characterizes the asymptotic average behavior of grah-wide error as characterized by $p_{G,\boldsymbol{y}}(\hat{\boldsymbol{x}}_{BP^{t}})$.

\begin{Lemma}\label{Le.3}
For $ t= t(n)$ such that $(\frac{a+b}{2})^{t} = n^{o(1)}$, $\lim_{n\to\infty} |p_{G,\boldsymbol{y}}(\hat{\boldsymbol{x}}_{BP^{t}}) - q^{*}_{T^{t}}| = 0$.
\end{Lemma}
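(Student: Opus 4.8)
The plan is to reduce the graph-wide average error to the error at a single fixed node, transfer that error to the corresponding tree problem via the coupling of Lemma~\ref{Le.2}, and exploit the fact that belief propagation is exact on trees so that its decision coincides with the tree MAP decision analyzed through Lemma~\ref{Le.1}.

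First I would use the exchangeability of the model. Since the node labels are i.i.d., the side information is generated identically across nodes, the edge distribution is invariant under relabeling, and the belief-propagation update in Table~\ref{tab} treats every node symmetrically, the pair $(x_i,\hat{x}_{BP^t}(i))$ has the same joint law for every $i$. Hence every summand in the definition of $p_{G,\boldsymbol{y}}$ is equal, giving $p_{G,\boldsymbol{y}}(\hat{\boldsymbol{x}}_{BP^t}) = \mathbb{P}\{x_1 \neq \hat{x}_{BP^t}(1)\}$ and reducing the claim to a statement about the single node $1$.

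Next I would establish locality and then transfer to the tree. Unrolling the recursion in Table~\ref{tab} from $R^t_1$ back through the messages $R^{t-1}_{\cdot\to\cdot}, R^{t-2}_{\cdot\to\cdot}, \dots$ down to the initialization $R^0\equiv 0$ shows that $R^t_1$, and therefore $\hat{x}_{BP^t}(1)$, is a deterministic function of the depth-$t$ neighborhood data $(G_1^t,\boldsymbol{y}_{G_1^t})$ alone. I would then invoke Lemma~\ref{Le.2}: with probability tending to one there is a coupling on which $(G_1^t,\boldsymbol{x}_{G_1^t},\boldsymbol{y}_{G_1^t}) = (T_1^t,\tau_{T_1^t},\tilde{\tau}_{T_1^t})$. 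On this event $G_1^t$ is a tree isomorphic to $T_1^t$, so the leave-one-out messages of belief propagation carry no repeated information: each incoming message $R^{t-1}_{k\to 1}$ equals the subtree log-likelihood ratio $\Gamma_k^{t-1}$, and combining them at the root through the recursion of Lemma~\ref{Le.1} yields $R^t_1 = \Gamma_1^t$. Since $\Gamma_1^t$ is the exact log-likelihood ratio and the prior on $\tau_1$ is uniform, thresholding it at zero is exactly the MAP rule, so on the coupling event $\hat{x}_{BP^t}(1) = \hat{\tau}_{MAP}(1)$ while simultaneously $x_1 = \tau_1$.

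Finally I would bound the discrepancy. Writing $E$ for the coupling event of Lemma~\ref{Le.2}, the error event $\{x_1 \neq \hat{x}_{BP^t}(1)\}$ agrees with the tree MAP error event $\{\tau_1 \neq \hat{\tau}_{MAP}(1)\}$ on $E$, so the two probabilities can differ by at most $\mathbb{P}(E^c)$; hence $|\mathbb{P}\{x_1 \neq \hat{x}_{BP^t}(1)\} - q^*_{T^t}| \leq \mathbb{P}(E^c) \to 0$, which together with the first step gives the claim. I expect the main obstacle to be the middle step, namely rigorously arguing that the graph belief-propagation iterate restricted to the random neighborhood coincides message by message with the exact tree computation of Lemma~\ref{Le.1}, including verifying that the leave-one-out structure of the graph messages collapses to the parent-excluding tree recursion precisely once the neighborhood is known to be loop-free. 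The remaining ingredients, exchangeability and the union-type bound through the coupling, are routine.
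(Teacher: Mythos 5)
Your proposal is correct and follows the same route as the paper's own (much terser) proof: invoke the coupling of Lemma~\ref{Le.2} so that on the coupling event $R_i^t=\Gamma_i^t$ and the BP decision coincides with the tree MAP decision, then absorb the coupling failure probability into an $o(1)$ term. The exchangeability reduction and the explicit unrolling of the message recursion that you spell out are left implicit in the paper but are exactly the steps its one-line argument relies on.
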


\begin{proof}
By Lemma~\ref{Le.2},  $(G_{i}^{t},\boldsymbol{x}_{G_{i}^{t}},\boldsymbol{y}_{G_{i}^{t}}) = (T_{i}^{t},\tau_{T_{i}^{t}},\tilde{\tau}_{T_{i}^{t}})$ with probability converging to 1. This implies that $R^{t}_{i} = \Gamma_{i}^{t}$, and hence, $p_{G,\boldsymbol{y}}(\hat{\boldsymbol{x}}_{BP^{t}}) = q^{*}_{T^{t}} + o(1)$, where the $o(1)$ term comes from the coupling error of Lemma~\ref{Le.2}.
\end{proof}

So far the results hold for all $a$ and $b$ as long as $\frac{(a+b)}{2} = n^{o(1)}$ and $\frac{a}{b} = \Theta(1)$. Now let $a = b + \mu \sqrt{b}$, for a fixed positive constant $\mu$. Let $U_{+}$ and $U_{-}$ be two random variables drawn according to the distribution of $h_i$ conditioned respectively on $\tau_i =1$ and $\tau_i =-1$. Then the following theorem describes a density evolution that evaluates $q^{*}_{T^{t}}$.

\begin{Theorem}\label{Th.1}
Assume as $n\to\infty$, $b\to\infty$ and $\frac{a-b}{\sqrt{b}} \to \mu$, for a fixed positive constant $\mu$. Also, let $h(\nu) = \mathbb{E}[\tanh(\nu + \sqrt{\nu}Z + U_{+})]$, where $Z\sim \mathcal{N}(0,1)$. Define $\bar{\nu}$ to be the smallest fixed point of $\nu = \frac{\mu^{2}}{4} h(\nu)$. Then:

\begin{align}
\label{eq.err.1}
\lim_{t\to\infty}\lim_{n\to\infty} p_{G,\boldsymbol{y}}(\hat{\boldsymbol{x}}_{BP^{t}}) =&  \frac{1}{2} \Big(\mathbb{E}_{U_{+}}\big[ Q\big(\frac{\bar{\nu} + U_{+}}{\sqrt{\bar{\nu}}}\big) \big] \twocolbreak + \mathbb{E}_{U_{-}}\big[ Q\big(\frac{\bar{\nu} - U_{-}}{\sqrt{\bar{\nu}}}\big) \big]  \Big)
\end{align}
\end{Theorem}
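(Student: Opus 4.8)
The plan is to invoke Lemma~\ref{Le.3} to reduce the graph problem to the tree problem, so that it suffices to evaluate $\lim_{t\to\infty}\lim_{n\to\infty} q^{*}_{T^{t}}$, the MAP error on the Galton--Watson tree, which is determined entirely by the conditional law of the root log-likelihood ratio $\Gamma_{0}^{t}$. First I would track this law through the recursion of Lemma~\ref{Le.1} in the dense limit $b\to\infty$ with $a=b+\mu\sqrt{b}$. Since $\beta=\frac{1}{2}\log(a/b)=\frac{\mu}{2\sqrt{b}}+o(b^{-1/2})$, a Taylor expansion of the per-child message around $\beta=0$ gives $\frac{1}{2}\log\frac{1+e^{2\beta+2\Gamma}}{e^{2\beta}+e^{2\Gamma}}=\beta\tanh(\Gamma)+O(\beta^{2})$, so that, conditioned on $\tau_{j}=1$, one has $\Gamma_{j}^{t}\approx h_{j}+\beta\sum_{k\in\mathcal{N}_{j}}\tanh(\Gamma_{k}^{t-1})$.

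Next I would apply a central-limit argument to the (Poisson-many) children. These split into $L_{j}\sim\mathrm{Pois}(a/2)$ same-label offspring and $M_{j}\sim\mathrm{Pois}(b/2)$ opposite-label offspring, so writing $m_{t-1}=\mathbb{E}[\tanh(\Gamma^{t-1})\mid\tau=1]$ and using the $\pm$ symmetry $\Gamma^{t-1}\mid\tau=-1 \stackrel{d}{=}-(\Gamma^{t-1}\mid\tau=1)$, the compound-Poisson sum has conditional mean $\beta\frac{a-b}{2}m_{t-1}\to\frac{\mu^{2}}{4}m_{t-1}$ and conditional variance $\beta^{2}\frac{a+b}{2}\mathbb{E}[\tanh^{2}(\Gamma^{t-1})\mid\tau=1]\to\frac{\mu^{2}}{4}\mathbb{E}[\tanh^{2}(\Gamma^{t-1})\mid\tau=1]$. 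The crucial ingredient is the consistency (symmetry) identity for a genuine log-likelihood ratio, $\mathbb{E}[\tanh(\Gamma)\mid\tau=1]=\mathbb{E}[\tanh^{2}(\Gamma)\mid\tau=1]$, which follows from the LLR symmetry relations $f_{+}(\ell)=e^{\ell}f_{-}(\ell)$ and $f_{-}(\ell)=f_{+}(-\ell)$ via the reflection $\ell\mapsto-\ell$ inside the integral. This forces the limiting conditional mean and variance of the graph part to coincide, so that $\Gamma_{0}^{t}\mid\tau=1$ converges to the consistent Gaussian $\nu_{t}+\sqrt{\nu_{t}}\,Z+U_{+}$ with $\nu_{t}=\frac{\mu^{2}}{4}h(\nu_{t-1})$ and $h(\nu)=\mathbb{E}[\tanh(\nu+\sqrt{\nu}Z+U_{+})]$.

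To pass to the $t\to\infty$ limit I would show the scalar recursion $\nu_{t}=\frac{\mu^{2}}{4}h(\nu_{t-1})$ started from $\nu_{0}=0$ is monotone: $h$ is nondecreasing in $\nu$, so the iterates increase and, being bounded by $\mu^{2}/4$, converge to the smallest fixed point $\bar{\nu}$ of $\nu=\frac{\mu^{2}}{4}h(\nu)$. Finally, substituting the fixed-point law $\Gamma_{0}\mid\tau=1 \sim \bar{\nu}+\sqrt{\bar{\nu}}Z+U_{+}$ into the MAP rule $\hat{\tau}=\mathrm{sign}(\Gamma_{0})$ gives error $\mathbb{P}(\bar{\nu}+\sqrt{\bar{\nu}}Z+U_{+}<0)=\mathbb{E}_{U_{+}}[Q(\frac{\bar{\nu}+U_{+}}{\sqrt{\bar{\nu}}})]$ on the $+1$ class and, by symmetry, $\mathbb{E}_{U_{-}}[Q(\frac{\bar{\nu}-U_{-}}{\sqrt{\bar{\nu}}})]$ on the $-1$ class; averaging with weights $\frac{1}{2}$ yields~\eqref{eq.err.1}.

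The main obstacle is the rigorous justification of the Gaussian density-evolution limit: controlling the $O(\beta^{2})$ linearization error uniformly over the support of $\Gamma$, verifying a Lindeberg-type condition for the compound-Poisson sums so the CLT applies at each fixed depth, and legitimately interchanging the $n\to\infty$ (hence $b\to\infty$) limit with the expectations. The symmetry/consistency identity is the conceptual crux that collapses the two-parameter (mean, variance) evolution into the single scalar recursion; confirming that this property is preserved under the recursion, together with establishing monotone convergence to the \emph{smallest} (rather than a larger) fixed point, are the steps demanding the most care.
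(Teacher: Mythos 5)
Your outline follows the same route as the paper: reduce to the tree via Lemma~\ref{Le.3}, linearize the per-child message (the paper writes $F(x)=-\beta+\frac{e^{4\beta}-1}{2}f(x)+\frac12\psi(\cdot)$ with $|\psi(x)|\le x^2$, which is exactly your $\beta\tanh(\Gamma)+O(\beta^2)$ expansion), compute mean and variance of the compound-Poisson sum, invoke a Berry--Esseen/Gaussian limit (the paper's Lemma~\ref{Le.5}, adapted from \cite{Main_paper2}), and close with the monotone convergence of $\nu_t$ to the smallest fixed point via Lemma~\ref{Le.6}. The one substantive discrepancy is how you collapse mean and variance into a single scalar. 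You invoke the reflection symmetry $f_-(\ell)=f_+(-\ell)$ (equivalently $\Gamma\mid\tau=-1\overset{d}{=}-(\Gamma\mid\tau=1)$) and the consistency identity $\mathbb{E}[\tanh(\Gamma)\mid\tau=1]=\mathbb{E}[\tanh^2(\Gamma)\mid\tau=1]$; both require the side-information channel to be symmetric, which the theorem does \emph{not} assume --- it is stated for arbitrary $M$-ary $\alpha_{\pm,m}$, where generally $U_-\not\overset{d}{=}-U_+$. The paper instead uses only the change-of-measure identity $\mathbb{E}[g(\Gamma)\mid\tau=-1]=\mathbb{E}[g(\Gamma)e^{-2\Gamma}\mid\tau=1]$, which holds for any likelihood ratio. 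From it one gets the identities actually needed: $\mathbb{E}[\tanh(\Gamma)\mid\tau=-1]=-\mathbb{E}[\tanh(\Gamma)\mid\tau=1]$ (so the mean of $\tfrac{a}{2}$ same-label plus $\tfrac{b}{2}$ opposite-label children is $\beta\frac{a-b}{2}\mathbb{E}[\tanh(\Gamma)\mid\tau=1]$), and $\tfrac12\big(\mathbb{E}[\tanh^2(\Gamma)\mid\tau=1]+\mathbb{E}[\tanh^2(\Gamma)\mid\tau=-1]\big)=\mathbb{E}[\tanh(\Gamma)\mid\tau=1]$, which is the correct averaged form of your consistency claim and is what makes the limiting variance equal the limiting mean. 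So your conclusion is right, but the symmetry-based justification only covers the symmetric-side-information special case; replacing it with the change-of-measure argument repairs the general case. Your remaining concerns (uniform control of the $O(\beta^2)$ error, Lindeberg condition, interchanging limits) are precisely what the paper delegates to the adapted Lemmas 7.1--7.4 of \cite{Main_paper2}.
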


\begin{proof}
The proof has similarities with~\cite{Main_paper2}. For brevity, we only describe the new developments compared with~\cite{Main_paper2} and  the corresponding arguments.

Define 
\begin{equation}
F(x) \triangleq \frac{1}{2} \log\big( \frac{e^{2x+2\beta}+1}{e^{2x}+e^{2\beta}} \big)
\label{eq:defF}
\end{equation}
and
for all $t \geq 1$, $\Phi_{j}^{t} = \sum_{k \in N_{j}} F(\Phi_{k}^{t-1} +  h_{k})$. Thus, for all $t \geq 0$,
\begin{align}
\Gamma_{j}^{t} &=  h_{j} + \Phi_{j}^{t}.
\end{align}

We are interested in the moments of $\Phi_{j}^{t}$ conditioned on node label $\tau_j =-1$ and $\tau_j =1$. For convenience of notation, we define new random variables $W_{+}^{t}$ and $W_{-}^{t}$ whose distribution is identical to $\Phi_{j}^{t}$  when $\tau_{j}$ is equal to $1$ and $-1$, respectively.
\begin{Lemma}\label{Le.4}
For all $t \geq 0$,
\begin{align}\nonumber
& \mathbb{E}[W_{\pm}^{t+1}] = \pm \frac{\mu^{2}}{4}\mathbb{E}[\tanh(W_{+}^{t}+U_{+})] + O(a^{-\frac{1}{2}}) \\ \nonumber
& \text{var}(W_{\pm}^{t+1}) = \frac{\mu^{2}}{4}\mathbb{E}[\tanh(W_{+}^{t}+U_{+})] + O(a^{-\frac{1}{2}}) \nonumber
\end{align}
\end{Lemma}
\begin{proof}

The proof for $\mathbb{E}[W_{-}^{t+1}]$ and $\text{var}(W_{-}^{t+1})$ departs from~\cite[Lemma 7.1]{Main_paper2} in the distribution of $U_{\pm}$.

Define $\psi(x) = \log(1+x) - x$. It then follows from Taylor expansion that $|\psi(x)| \leq x^{2}$. Then, $F(x)$, defined in~\eqref{eq:defF}, can be written as:
\begin{align}
F(x) &= -\beta + \frac{1}{2} \log\big( 1 + \frac{e^{4\beta}-1}{e^{-2(x-\beta)}+1} \big)\nonumber\\ &  = -\beta + \frac{e^{4\beta}-1}{2}f(x) + \frac{1}{2}\psi\big( (e^{4\beta}-1)f(x) \big)\nonumber
\end{align}
where $f(x) = \frac{1}{1+e^{-2(x-\beta)}}$. It then follows that:
\begin{align}
\Phi_{j}^{t+1} =& \sum_{k \in N_{j}} F(\Phi_{k}^{t} +  h_{k}) \nonumber\\
=& \sum_{k \in N_{j}} \Big[ -\beta + \frac{e^{4\beta}-1}{2}f(\Phi_{k}^{t} +  h_{k})  \twocolbreak + \frac{1}{2} \psi\big( (e^{4\beta}-1)f(\Phi_{k}^{t} +  h_{k})\big) \Big]
\end{align}
Calculating the mean of the two sides of equation above conditioned on $\tau_j=\pm1$,
\begin{align}
&  \mathbb{E}[W_{+}^{t+1}] - \mathbb{E}[W_{-}^{t+1}] \twocolbreak = (\frac{e^{4\beta}-1}{4})(a-b)\mathbb{E}\Big[f(W_{+}^{t} + U_{+}) -  f(W_{-}^{t} + U_{-})\Big] \nonumber\\ 
  & + \frac{a-b}{4}\mathbb{E}\Big[ \psi\big( (e^{4\beta}-1)f(W_{+}^{t} + U_{+}) \big) \twocolbreak- \psi\big( (e^{4\beta}-1)f(W_{-}^{t} + U_{-}) \big) \Big] \label{bin.symm.new.1}
\end{align}

By the definition of $\Gamma_{j}^{t}$ and a change of measure, it follows that $\mathbb{E}[g(\Gamma_{j}^{t}) | \tau_{j} = -1] = \mathbb{E}[g(\Gamma_{j}^{t}) e^{-2\Gamma_{j}^{t}} | \tau_{j} = 1]$ for any measurable function $g$ such that the expectations are well defined. Also, notice that:
 \begin{align}
(\frac{e^{4\beta}-1}{4})(a-b) & = \frac{\mu^{2}}{2} \frac{a+b}{2b} = \frac{\mu^{2}}{2}(1 + \frac{a-b}{2b}) \\ 
& = \frac{\mu^{2}}{2} + O(a^{-\frac{1}{2}}) \label{bin.symm.new.2}
\end{align}

 Moreover, since $|\psi(x)| \leq x^{2}$ and $|f(x)| \leq 1$, it follows that $\psi\big( (e^{4\beta}-1)f(W_{+}^{t} + U_{+}) \big) - \psi\big( (e^{4\beta}-1)f(W_{-}^{t} + U_{-}) \big) \leq 2(e^{4\beta}-1)^{2}$. Therefore,
 \begin{align}
   \frac{a-b}{4}\mathbb{E}\Big[ \psi\big( (e^{4\beta}-1)f(W_{+}^{t} + U_{+}) \big) \twocolnewline- \psi\big( (e^{4\beta}-1)f(W_{-}^{t} + U_{-}) \big) \Big]  & \leq \frac{a-b}{2}(e^{4\beta}-1)^{2}\nonumber \\ 
   & = O(a^{\frac{-1}{2}}) \label{bin.symm.new.3}
\end{align}

 Combining~\eqref{bin.symm.new.1},~\eqref{bin.symm.new.2}, and~\eqref{bin.symm.new.3}, 
 \begin{align}
\mathbb{E}[W_{+}^{t+1}] &= \mathbb{E}[W_{-}^{t+1}] \twocolbreak+ \big( \frac{\mu^{2}}{2} + O(a^{\frac{-1}{2}}) \big)\mathbb{E}\big[ f(W_{+}^{t} + U_{+}) (1 - e^{-2(W_{+}^{t} + U_{+})})\big] \twocolbreak + O(a^{\frac{-1}{2}}) \nonumber \\ 
& \overset{(a)}{=} \frac{\mu^{2}}{4}\mathbb{E}[\tanh(W_{+}^{t}+U_{+})] - O(a^{\frac{-1}{2}})\mathbb{E}[e^{-2(W_{+}^{t} + U_{+})}]  \twocolbreak+ O(a^{\frac{-1}{2}}) \nonumber \\ 
&\overset{(b)}{=} \frac{\mu^{2}}{4}\mathbb{E}[\tanh(W_{+}^{t}+U_{+})] + O(a^{\frac{-1}{2}})
\end{align}
where $(a)$ holds from the definition of $f(x)$, the definition of $\tanh(x)$ and the fact that $f(x) = \frac{1}{1 + e^{-2x}} + O(a^{\frac{-1}{2}})$ and $(b)$ holds because by change of measure $\mathbb{E}[e^{-2(W_{+}^{t} + U_{+})}] = \mathbb{E}[e^{-2(W_{-}^{t} + U_{-})}e^{2(W_{-}^{t} + U_{-})}] = 1$. This concludes the proof for $\mathbb{E}[W_{+}^{t+1}]$. The proof for $\text{var}(W_{+}^{t+1})$ follows similarly.
\end{proof}

\begin{Lemma}\label{Le.5}
Assume $\alpha_{-,m}, \alpha_{+,m}$ are constants as $n \to \infty$. Let $h(\nu) = \mathbb{E}[\tanh(\nu + \sqrt{\nu}Z + U_{+})]$, where $Z\sim \mathcal{N}(0,1)$. Define $(\nu_{t}: t\geq0)$ recursively by $\nu_{0} =0$ and $\nu_{t+1} = \frac{\mu^{2}}{4}h(\nu_{t})$. Then, for any fixed $t \geq 0$, as $n \to\infty$:
\begin{equation}
\sup_{x} \left| \mathbb{P}\bigg\{ \frac{W_{\pm}^{t} \mp \nu_{t}}{\sqrt{\nu^{t}}} \leq x \bigg\} - \mathbb{P}\{ Z \leq x \} \right| = O(a^{-\frac{1}{2}})
\end{equation}
\end{Lemma}

The proof of Lemma~\ref{Le.5} departs from~\cite[Lemma 7.3]{Main_paper2} only in the distribution of $U_{\pm}$, and is therefore omitted for brevity.

In view of Lemmas~\ref{Le.4},~\ref{Le.5}, for all $j$, $(\Phi_{j}^{t} | \tau_{j} = \pm 1) \sim \mathcal{N}(\pm \nu_{t}, \nu_{t})$. Hence, 
\begin{align}\nonumber
\lim_{n\to \infty} \mathbb{P}(\Gamma_{j}^{t} > 0 | \tau_{j} = -1) 
& = \mathbb{E}_{U_{-}}\big[ Q\big(\frac{\bar{\nu} - U_{-}}{\sqrt{\bar{\nu}}}\big) \big]
\end{align}
\begin{align}\nonumber
\lim_{n\to \infty} \mathbb{P}(\Gamma_{j}^{t} < 0 | \tau_{j} = 1)
& = \mathbb{E}_{U_{+}}\big[ Q\big(\frac{\nu_{t} + U_{+}}{\sqrt{\nu_{t}}}\big) \big]
\end{align}
where $Q(x) = \int_{x}^{\infty} \frac{1}{\sqrt{2\pi}} e^{\frac{-y^{2}}{2}} dy$. Hence, from Lemma~\ref{Le.3}, 

\begin{align*}
\label{eq.6}
\lim_{n\to\infty} &p_{G,\boldsymbol{y}}(\hat{\boldsymbol{x}}_{BP^{t}}) =  \lim_{n\to\infty} q^{*}_{T^{t}} \twocolbreak= \frac{1}{2} \Big(\mathbb{E}_{U_{+}}\big[ Q\big(\frac{\nu_{t} + U_{+}}{\sqrt{\nu_{t}}}\big) \big] + \mathbb{E}_{U_{-}}\big[ Q\big(\frac{\nu_{t} - U_{-}}{\sqrt{\nu_{t}}}\big) \big]  \Big) 
\end{align*}

It remains to show that $\lim_{t \to\infty} \nu_{t} = \bar{\nu}$. 

\begin{Lemma}\label{Le.6}
Let $h(\nu) = \mathbb{E}[\tanh(\nu + \sqrt{\nu}Z + U_{+})]$, where $Z\sim \mathcal{N}(0,1)$. Then, $h(\nu)$ is continuous  on $[0,\infty]$ and $ h^{'}(\nu) \geq 0$ for $\nu \in (0,\infty)$.
\end{Lemma}

The proof of Lemma~\ref{Le.6} departs from~\cite[Lemma 7.4]{Main_paper2} only in the distribution of $U_{\pm}$, and is therefore omitted for brevity.

Recall that $\nu_{0} = 0$. By direct substitution $\nu_{0} \leq \nu_{1}$. Now, let $\nu_{t+1} \geq \nu_{t}$. By Lemma~\ref{Le.6}, 
\begin{equation}
\nu_{t+2} - \nu_{t+1} = \frac{\mu^{2}}{4} (h(\nu_{t+1}) - h(\nu_{t})) = \frac{\mu^{2}}{4}h^{'}(x)
\end{equation}
for some $x \in (\nu_{t},\nu_{t+1})$. By Lemma~\ref{Le.6}, $h^{'}(x) \geq 0$ for $x \in (0,\infty)$. Thus, $\nu_{t+2} \geq \nu_{t+1}$, and hence, it has been shown by induction on $t$ that $\nu_{t}$ is non-decreasing in $t$. Also, note that $\nu_{0} = 0 \leq \bar{\nu}$. If we assume that $\nu_{t} \leq \bar{\nu}$, then by monotonicity of $h$, we have: $\nu_{t+1} = \frac{\mu^{2}}{4} h(\nu_{t}) \leq \frac{\mu^{2}}{4} h(\bar{\nu}) = \bar{\nu}$. Thus, $\lim_{t \to\infty} \nu_{t} = \bar{\nu}$.

\end{proof}

%%=====================================================================================================

\subsection{Exit Analysis}\label{EX.}

Equation~\eqref{eq.err.1} characterizes the asymptotic residual error of belief propagation for recovering binary symmetric communities with side information. However, we seek answers to some natural and interesting questions that are not directly apparent by inspection from~\eqref{eq.err.1}, such as: What is the effect of quality and quantity of side information on the residual error? How is this related to the amount of information provided by the graph about node labels? Can side information dominate the performance of belief propagation for community detection, and if so, under what conditions does that happen? In this section, we show that EXIT charts can provide answers to these questions, via existence and location of crossing points of EXIT curves.

We begin by calculating the mutual information between the label of node $i$, $x_{i}$, and its belief at time $t$, namely $R_{i}^{t}$.
\begin{align}
I(x_{i},R_{i}^{t}) \nonumber \\
 = &1-H(x_{i}|R_{i}^{t})  \nonumber \\
=& 1- \frac{1}{2} \int_{-\infty}^{\infty} \Bigg(\sum_{m=1}^{M} \alpha_{+,m} \frac{e^{\frac{-(y-(v_{t}+ h_{m}))^{2}}{2v_{t}}}}{\sqrt{2\pi v_{t}}}\Bigg) \twocolbreak{} \log_{2}\bigg(1 + \frac{\sum_{m=1}^{M} \alpha_{-,m} e^{\frac{-(y-(-v_{t}+ h_{m}))^{2}}{2v_{t}}}}{\sum_{m=1}^{M} \alpha_{+,m} e^{\frac{-(y-(v_{t}+ h_{m}))^{2}}{2v_{t}}}} \bigg) dy  \nonumber \\
&-\frac{1}{2} \int_{-\infty}^{\infty} \Bigg(\sum_{m=1}^{M} \alpha_{-,m} \frac{e^{\frac{-(y-(-v_{t}+ h_{m}))^{2}}{2v_{t}}}}{\sqrt{2\pi v_{t}}}\Bigg) \twocolbreak{} \log_{2}\Bigg(1 + \frac{ \sum_{m=1}^{M} \alpha_{+,m} e^{\frac{-(y-(v_{t}+ h_{m}))^{2}}{2v_{t}}}}{\sum_{m=1}^{M} \alpha_{-,m} e^{\frac{-(y-(-v_{t}+ h_{m}))^{2}}{2v_{t}}}} \Bigg) dy \label{mutual}
\end{align}

For simplicity and to show the power of EXIT analysis in drawing insights that cannot be easily deduced from belief propagation equations, we consider a concrete example with $M=3$. More precisely, for each node $i$, we observe $y_{i} = x_{i}$ with probability $\epsilon(1 - \alpha)$ or $y_{i} = -x_{i}$ with probability $\epsilon\alpha$ or $y_{i} = 0$ with probability $1-\epsilon$, independently at random, for $\alpha \in (0,0.5)$ and $\epsilon \in [0,1]$. Thus, $U_{+} = -U_{-}$, where $U_{+} \in \{ \gamma, -\gamma, 0 \}$ with probabilities $\epsilon(1 - \alpha)$, $\epsilon\alpha$ and $1-\epsilon$, respectively, where $\gamma \triangleq \frac{1}{2} \log\big( \frac{1-\alpha}{\alpha}\big)$. Note that for fixed $\alpha$ and $\epsilon$, $I(x_{i},R_{i}^{t})$ is function of $\nu_{t}$ only. Hence, we will denote it by $J(\nu_{t})$.

Based on the belief propagation algorithm described in Table~\ref{tab}, at iteration $t$, node $i$ receives the beliefs of all nodes $j \in N(i)$ calculated at iteration $(t-1)$. We denote the information node $i$ receives from node $j$ as $I_{in}$. Then, node $i$ computes the new information it has at iteration $t$. We denote this information as $I_{out}$. Both $I_{in}$ and $I_{out}$ can be calculated using~\eqref{mutual} as $J(\nu_{t-1})$ and $J(\nu_{t})$, respectively. Since $J(\nu_{t})$ is monotonically increasing in $\nu_{t}$~\cite{exit}, $J(\nu_{t})$ is reversible. Thus, $\nu_{t} = J^{-1}(I(x_{i},R_{i}^{t}))$.  Moreover, $\nu_{t-1}$ and $\nu_{t}$ are related by
\[
\nu_{t+1} = \frac{\mu^{2}}{4}h(\nu_{t})
\]
therefore  $I_{in}$ and $I_{out}$ for node $i$ are related as follows
\begin{align}
 & I_{out} = \nonumber \\
 &  J\Bigg(\frac{\mu^{2}}{4} \bigg[ \epsilon(1-\alpha) \mathbb{E}_{Z}[\tanh(J^{-1}(I_{in}) + \sqrt{J^{-1}(I_{in})}Z + \gamma)] + \nonumber \\
& \epsilon\alpha \mathbb{E}_{Z}[\tanh(J^{-1}(I_{in}) + \sqrt{J^{-1}(I_{in})}Z - \gamma)] + \twocolbreak{} (1-\epsilon) \mathbb{E}_{Z}[\tanh(J^{-1}(I_{in}) + \sqrt{J^{-1}(I_{in})}Z )] \bigg] \Bigg) 
\end{align}

%To compute $J$ and $J^{-1}$, we apply curve fitting using the Levenberg Marquardt algorithm~\cite{exit}.
There is a fundamental difference between using EXIT charts in the context of community detection in stochastic block models and EXIT charts in the standard context of coding theory. Taking Low Density Parity Check (LDPC) Codes as an example, each variable node $i$ receives from a check node $j$ the information or belief of that check node about whether the variable node is one or zero. Thus, the input log-likelihood ratio received by variable node $i$ is actually calculated conditioned on the value of the variable node $i$. Community detection presents a different scenario: Each node $i$ receives the belief of node $j$. However, the belief of node $j$ is calculated conditioned on the value of node $j$, not the value of node $i$. This reflects the fundamental differences between the bipartite graph representing FEC codewords and a random graph representing relationships of randomly distributed node labels. The former is fundamentally asymmetric, where parity nodes carry no new information conditioned on bit nodes. On the contrary, in community detection, nodes are (stochastically) symmetric and all of them carry information.

In community detection, for a node $i$ at iteration t, we define $I_{in} = I(x_{j},R_{j}^{t-1})$, and $I_{out} = I(x_{i},R_{i}^{t})$. In other words, the amount of information transferred {\em from} each node outward represents how confident (in terms of mutual information) is the belief of that node about the value of its own label. 
%by using $I_{in} = I(x_{j},R_{j}^{t-1})$, each point on the curve of the EXIT chart reflects how much the information the belief of node $j$ carries about its value $x_{j}$ affects the information belief of node $i$ carries about its value $x_{i}$. %Another important difference is that in coding theory, the intersection point $(1,1)$ always exists on the EXIT chart. This is because in coding theory, both the incoming and outgoing beliefs are calculated with respect to the same node. Thus, if one belief reaches its maximum of $1$, the other will reach its maximum too, since both of them are beliefs about the value of the same node. In our case, the incoming and outgoing beliefs are calculated with respect to different nodes. Hence, the intersection point $(1,1)$ might not be attained.

To compute $J$ and $J^{-1}$, we apply curve fitting using the Levenberg Marquardt algorithm~\cite{exit}.
 %to obtain:
%\begin{equation}\label{eq.J}
%J(x) \approx
%\begin{cases}
%-0.202x^{3}+ 0.043x^{2} + 0.656x ,& 0 \leq x  \leq 0.5\\
%1-e^{0.001x^{2} - 0.522x -0.221\sqrt{x} + 0.059},&  0.5 < x  \leq 10\\
%1,& x > 15
%\end{cases}
%\end{equation}
%\begin{equation}\label{eq.Jinv}
%J^{-1}(x) \approx
%\begin{cases}
%3.849x^{2} -0.822x + 0.529\sqrt{x},&  0 \leq x  \leq 0.7\\
%-1.876\log[0.804(1-x)] - 1.229x,&  0.7 < x  < 1
%\end{cases}
%\end{equation}
%at $\mu = 2$, $\alpha=0.4$, $\mu = 2$, $\alpha=0.01$, $\mu = 6$, $\alpha=0.4$ and $\mu = 6, \alpha=0.1$, respectively
%As stated in the introduction, the advantage of EXIT charts over density evolution is the insights it provides which can not be deduced easily from density evolution equations. The following two figures shows some of these insights.
Figures~\ref{fig1},~\ref{fig2}, and~\ref{fig3} show the EXIT curves for different values of $\mu$, $\alpha$ and $\epsilon$. From these figures, we can deduce the following:

\begin{figure}
\centering
\includegraphics[width=\Figwidth]{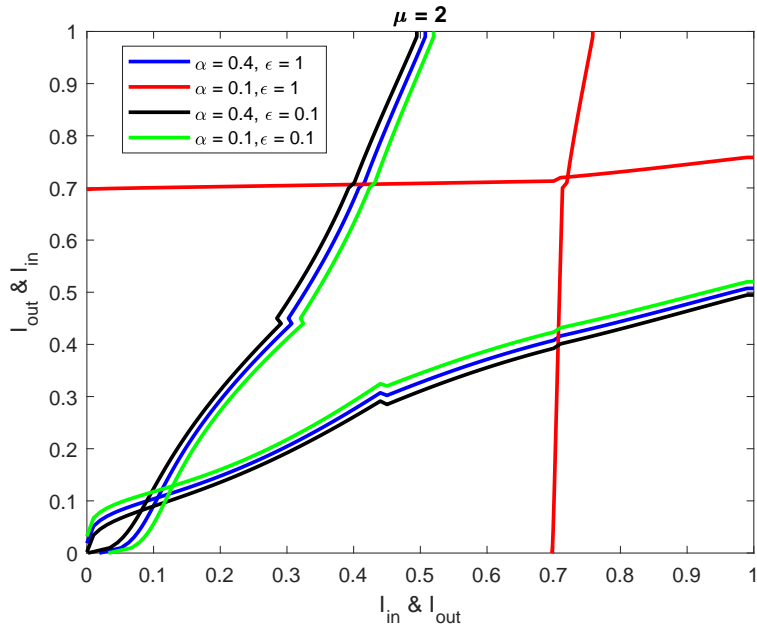}
\caption{EXIT Chart for $\mu =2$.}
\label{fig1}
\end{figure}
\begin{figure}
\centering
\includegraphics[width=\Figwidth]{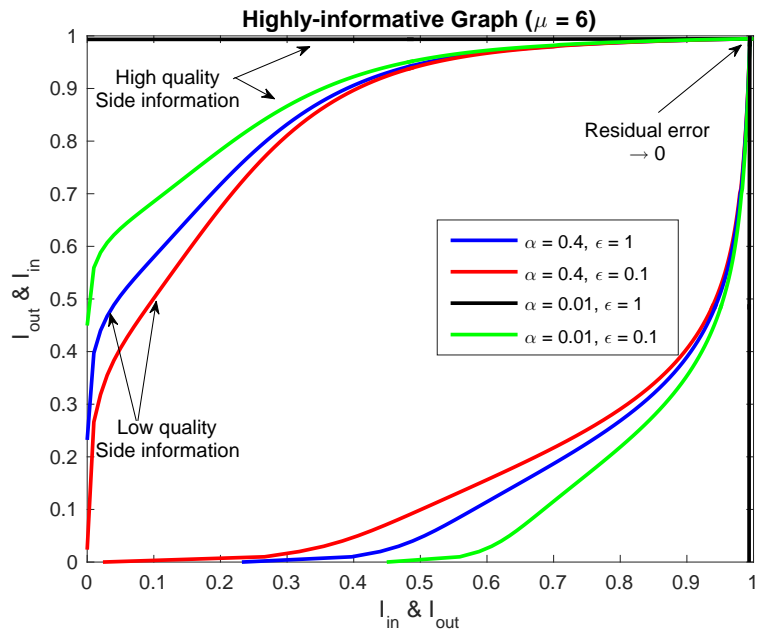}
\caption{EXIT Chart for $\mu =6$.}
\label{fig2}
\end{figure}
\begin{figure}
\centering
\includegraphics[width=\Figwidth]{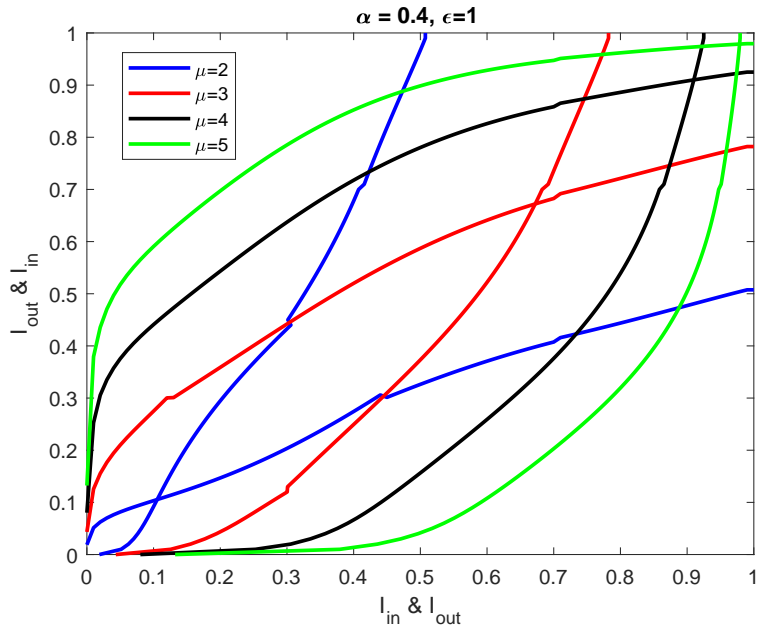}
\caption{EXIT Chart for $\alpha =0.4$ and $\epsilon=1$.}
\label{fig3}
\end{figure}

\begin{itemize}
\item Side information, with any quantity (any $\epsilon \neq 0$), regardless of the quality (e.g. $\alpha =0.4$), breaks the symmetry. Note that without side information the curves get stuck at the trivial $(0,0)$ point, implying that the belief propagation algorithm is a trivial random guessing estimator~\cite{Saad:ISIT2016}. This is true for all values of $\mu$.

\item The starting point of the curves, which indicates the quality of the initial estimate, depends crucially on the values of $\mu, \alpha, \epsilon$. For small values of $\mu$, e.g. $\mu =2$, EXIT charts reveal that the quantity of side information is not very important unless its quality is excellent. This can be seen in Figure~\ref{fig1}: when $\alpha =0.4$, the starting point for all values of $\epsilon \neq 0$ is almost the same. On the other hand, when $\alpha = 0.1$, the effect of $\epsilon$ on the starting point of the curve can be very significant, and the gap is around $0.7$ between $\epsilon = 1$ and $\epsilon =0.1$. For large values of $\mu$, e.g. $\mu =6$, the behavior changes. EXIT charts show that the effect of $\epsilon$ becomes more significant even when $\alpha =0.4$. This is because larger values of $\mu$ imply larger difference between $a$ and $b$, which means easier detection (quick convergence). Therefore  when $\mu$ is large, the quality of the initial guess can make a bigger difference, proportionally.

\item The intersection points on the curve exhibit almost the same behavior as the starting point. Note that the intersection point determines the value of $\bar{\nu}$, which determines the probability of error. In Figure~\ref{fig1}, when $\alpha =0.4$,  the intersection points are very close in value for $\epsilon \neq 0$. This shows that the quantity of side information does not enhance the performance of belief propagation for small values of $\mu$. On the other hand, when $\alpha = 0.1$,  the effect of $\epsilon$ on the intersection point of the curve (i.e., probability of error) is significant, even when $\mu =2$.

\item EXIT charts also show that when the graph is not very informative, e.g., $\mu=2$, even when side information provides significant information, e.g., when $\alpha=0.1, \epsilon=1$, the residual error does not improve markedly over the course of iterations. On the other hand, for highly-informative graphs, e.g., $\mu =6$, even when side information provides a small amount of information, e.g., when $\alpha=0.4, \epsilon=0.1$, the eventual residual error improves significantly compared with the starting point. 

 \item Although side information can break symmetry, even with high quality, e.g., $\alpha =0.1$, unless $\epsilon \to 1$, one cannot hope to reach a vanishing fraction of misclassified nodes for a graph with small $\mu$. This stems from the fact that the two communities are symmetric and for nodes with erased side information, the only source of information is the messages coming from its neighbors.

\item When $\mu = 6$, for all values of $\alpha \in (0,0.5)$ and $\epsilon \in (0,1]$, one may achieve a vanishing fraction of misclassified nodes. This is because the only intersection point on the curve is approaching $(1,1)$, which is the maximum mutual information available for binary variables.

\item Figure~\ref{fig3} shows that as $\mu$ increases, there is always an intersection point. This suggests that one could not hope for vanishing residual error, i.e., weak recovery, except when $\mu \to \infty$ or $\alpha \to 0$. This suggests that belief propagation for recovering binary symmetric communities with side information does not have a phase transition for a finite $\mu$.
\end{itemize}

%%=====================================================================================================

\section{One Community Stochastic Block Model}\label{Bel.1}

We begin by studying the performance of belief propagation on a random tree with side information. Then, we show that the same performance is possible on a random {\em graph} drawn according to the one community stochastic block model with side information, using a coupling lemma~\cite{recovering_one}.

%We model random trees with side information in a manner roughly parallel to random graphs. 

Let $T$ be an infinite tree with nodes indexed by variable $i$, each of them possessing a label $\tau_{i} \in \{0,1\}$.  The root is node $i=0$. The sub-tree of depth $t$ rooted at node $i$ is denoted $T_{i}^{t}$. The sub-tree rooted at $i=0$ with depth $t$ is referenced often and is denoted simply $\CroppedTree$. Unlike the random graph counterpart, the tree and its node labels are generated together as follows: $\tau_{0}$ is a Bernoulli-$\frac{K}{n}$ random variable. For any $i\in T$, the number of its children with label $1$ is a random variable $H_i$ that is Poisson with parameter $Kp$ if $\tau_i=1$, and Poisson with parameter $Kq$ if $\tau_i=0$. The number of children of node $i$ with label $0$ is a random variable $F_i$ which is Poisson with parameter $(n-K)q$, regardless of the label of node $i$. The side information $\tilde{\tau}_i$ takes value in a finite alphabet $\{u_{1},\cdots,u_{M}\}$. The set of all labels in $T$ is denoted with $\TreeLabels$, all side information with $\TreeSideInfo$, and the labels and side information of $\CroppedTree$ with $\TreeLabels^t$ and $\TreeSideInfo^t$ respectively.  The likelihood of side information continues to be denoted by $\alpha_{+,m},\alpha_{-,m}$, as earlier.

The goal is to infer the label $\RootLabel$ given observations $\CroppedTree$ and $\TreeSideInfo^t$. The error probability of the estimator $\RootEstimator(\CroppedTree,\TreeSideInfo^t)$ is:
\begin{align}
\label{error.general}
p_{e}^{t} &\triangleq \frac{K}{n} \mathbb{P}(\RootEstimator = 0 | \RootLabel = 1) + \frac{n-K}{n} \mathbb{P}(\RootEstimator = 1 | \RootLabel = 0)
\end{align}
The maximum a posteriori (MAP) detector minimizes $p_{e}^{t}$ is given by $\hat{\tau}_{MAP} = 1_{\{\LLRCroppedTree \geq \nu\}}$, where $\LLRCroppedTree$ is the log likelihood ratio, 
\begin{align}
\label{likeli}
\LLRCroppedTree \triangleq \log\bigg(\frac{\mathbb{P}(\CroppedTree,\TreeSideInfo^t | \RootLabel = 1)}{\mathbb{P}(\CroppedTree,\TreeSideInfo^t | \RootLabel = 0)}\bigg)
\end{align}
and $\nu = \log(\frac{n-K}{K})$.
\begin{Lemma}
\label{recursive}
Let ${\mathcal N}_{i}$ denote the children of node $i$, $N_i\triangleq|{\mathcal N}_i|$ and $h_{i} \triangleq \log\big(\frac{\mathbb{P}(\tilde{\tau}_{i} | \tau_i =1)}{\mathbb{P}(\tilde{\tau}_{i} | \tau_i =0)}\big)$. Then,
\begin{align}
\Gamma_{i}^{t+1} &= -K(p-q) + h_{i} + \sum_{k \in {\mathcal N}_{i}} \log\bigg(\frac{\frac{p}{q}e^{\Gamma_{k}^{t} - \nu}+1}{e^{\Gamma_{k}^{t} - \nu}+1}\bigg)
\end{align}
\end{Lemma}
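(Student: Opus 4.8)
The plan is to mirror the recursive derivation of Lemma~\ref{Le.1}, adapting it to the asymmetries of the one-community model. Starting from the definition~\eqref{likeli} of $\Gamma_{i}^{t+1}$, I would first invoke the conditional-independence structure of the branching process: given the root label $\tau_i$, the side information $\tilde{\tau}_i$, the total number of children $N_i$, and the subtrees hanging off the children are mutually independent, and the pairs $(T_k^t,\tilde{\tau}_{T_k^t})$ are i.i.d.\ given the child labels. This factors the likelihood ratio into three additive pieces: a degree term $\log\frac{\mathbb{P}(N_i|\tau_i=1)}{\mathbb{P}(N_i|\tau_i=0)}$, the side-information term $h_i=\log\frac{\mathbb{P}(\tilde{\tau}_i|\tau_i=1)}{\mathbb{P}(\tilde{\tau}_i|\tau_i=0)}$, and a sum over children $k\in\mathcal{N}_i$ of the marginalized subtree ratios $\log\frac{\sum_{\tau_k}\mathbb{P}(T_k^t,\tilde{\tau}_{T_k^t}|\tau_k)\,\mathbb{P}(\tau_k|\tau_i=1)}{\sum_{\tau_k}\mathbb{P}(T_k^t,\tilde{\tau}_{T_k^t}|\tau_k)\,\mathbb{P}(\tau_k|\tau_i=0)}$.

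The new ingredient, absent from the symmetric model of Lemma~\ref{Le.1}, is that the offspring count is label-dependent: $N_i\sim\text{Pois}(\lambda_1)$ with $\lambda_1=Kp+(n-K)q$ when $\tau_i=1$, and $N_i\sim\text{Pois}(\lambda_0)$ with $\lambda_0=nq$ when $\tau_i=0$. Evaluating the Poisson likelihood ratio therefore gives $-(\lambda_1-\lambda_0)+N_i\log\frac{\lambda_1}{\lambda_0}$, and since $\lambda_1-\lambda_0=K(p-q)$, the constant $-K(p-q)$ in the claim is exactly this rate difference. I would then write $N_i=\sum_{k\in\mathcal{N}_i}1$ to fold the factor $\log\frac{\lambda_1}{\lambda_0}$ into the per-child sum, so that child $k$ contributes a single logarithm $\log\big(\frac{\lambda_1}{\lambda_0}\cdot\frac{\sum_{\tau_k}\mathbb{P}(T_k^t,\tilde{\tau}_{T_k^t}|\tau_k)\mathbb{P}(\tau_k|\tau_i=1)}{\sum_{\tau_k}\mathbb{P}(T_k^t,\tilde{\tau}_{T_k^t}|\tau_k)\mathbb{P}(\tau_k|\tau_i=0)}\big)$.

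The remaining work is algebraic. I would substitute the offspring transition probabilities $\mathbb{P}(\tau_k=1|\tau_i=1)=\frac{Kp}{\lambda_1}$, $\mathbb{P}(\tau_k=0|\tau_i=1)=\frac{(n-K)q}{\lambda_1}$, $\mathbb{P}(\tau_k=1|\tau_i=0)=\frac{K}{n}$, $\mathbb{P}(\tau_k=0|\tau_i=0)=\frac{n-K}{n}$, together with the identity $\mathbb{P}(T_k^t,\tilde{\tau}_{T_k^t}|\tau_k=1)=e^{\Gamma_k^t}\,\mathbb{P}(T_k^t,\tilde{\tau}_{T_k^t}|\tau_k=0)$ coming from the definition of $\Gamma_k^t$. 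The $\lambda_1$ in each numerator cancels against the degree factor $\frac{\lambda_1}{\lambda_0}$, leaving $\frac{n}{\lambda_0}=\frac{1}{q}$; after dividing numerator and denominator by $(n-K)q\,\mathbb{P}(T_k^t,\tilde{\tau}_{T_k^t}|\tau_k=0)$, child $k$ contributes $\log\frac{\frac{Kp}{(n-K)q}e^{\Gamma_k^t}+1}{\frac{K}{n-K}e^{\Gamma_k^t}+1}$. Recognizing $e^{-\nu}=\frac{K}{n-K}$ from $\nu=\log\frac{n-K}{K}$ rewrites $\frac{K}{n-K}e^{\Gamma_k^t}=e^{\Gamma_k^t-\nu}$ and $\frac{Kp}{(n-K)q}e^{\Gamma_k^t}=\frac{p}{q}e^{\Gamma_k^t-\nu}$, yielding precisely the summand $\log\frac{\frac{p}{q}e^{\Gamma_k^t-\nu}+1}{e^{\Gamma_k^t-\nu}+1}$ of the statement.

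I expect the main obstacle to be bookkeeping rather than conceptual: one must track carefully how the label-dependent degree distribution injects both the additive constant $-K(p-q)$ and a multiplicative $\frac{\lambda_1}{\lambda_0}$ per child, and then verify that this factor combines cleanly with the transition-weighted subtree marginals so that all dependence on $\lambda_1$, $n$, and the Poisson normalizations cancels, collapsing everything into the single $\nu$-shifted logarithm. Confirming that these cancellations occur exactly—leaving no residual $n$- or $\lambda$-dependence outside the stated form—is the delicate step.
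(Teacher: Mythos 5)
Your proposal is correct and follows essentially the same route as the paper: condition on the parent label, use the Poisson splitting property to factor the likelihood ratio into the degree term, the side-information term $h_i$, and a product over children of label-marginalized subtree ratios, then evaluate the Poisson degree ratio to extract $-K(p-q)$ while folding the per-child factor $\log\frac{\lambda_1}{\lambda_0}$ into the sum and simplifying with $e^{-\nu}=\frac{K}{n-K}$. You in fact spell out the algebra (the cancellation of $\lambda_1$ and the Poisson normalizations) that the paper compresses into its step $(c)$.
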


\begin{proof}
The independent splitting property of the Poisson distribution is used to give an equivalent description of the numbers of children having a given label for any vertex in the tree, as follows. The set of children of node $i$ is denoted ${\mathcal N}_i$ with cardinality $N_i=|{\mathcal N}_i|$. If $\tau_{i} = 1$, the number of its children $N_{i} \sim \text{Poi}(Kp + (n-K)q)$ and each of these children $j$, independently of everything else has label $\tau_{j} =1$ with probability $\frac{Kp}{Kp + (n-K)q}$  and  $\tau_{j} =0$ with probability $\frac{(n-K)q}{Kp + (n-K)q}$. If $\tau_{i} = 0$ the number of its children $N_{i} \sim \text{Poi}(nq)$ and each of these children $j$, independent from everything else, has label $\tau_{j} =1$ with probability $\frac{K}{n}$ and $\tau_{j} =0$ with probability $\frac{(n-K)}{n}$. Finally, for each node $i$ in the tree, side information $\tilde{\tau}_{i}$ is observed according to $\alpha_{+,m}, \alpha_{-,m}$. Then:
\begin{align} 
\Gamma_{0}^{t+1} &= \log\Bigg( \frac{\mathbb{P}(T^{t+1},\tilde{\tau}^{t+1} | \RootLabel = 1)}{\mathbb{P}(T^{t+1},\tilde{\tau}^{t+1} | \RootLabel = 0)} \Bigg) \nonumber \\ 
& = \log\Bigg( \frac{\mathbb{P}(N_{0} , \tilde{\tau}_{0}, \{T_{k}^{t}\}_{k \in {\mathcal N}_0},\{\tilde{\tau}_{k}^{t}\}_{k \in {\mathcal N}_0} | \RootLabel = 1)}{\mathbb{P}(N_{0} , \tilde{\tau}_{0}, \{T_{k}^{t}\}_{k \in {\mathcal N}_0},\{\tilde{\tau}_{k}^{t}\}_{k \in {\mathcal N}_0} | \RootLabel = 0)} \Bigg) \nonumber\\ 
& \overset{(a)}{=} \log\Bigg( \frac{\mathbb{P}\big( N_{0},\tilde{\tau}_{0} | \RootLabel =1\big)}{\mathbb{P}\big( N_{0},\tilde{\tau}_{0} | \RootLabel = 0\big)} \Bigg) \twocolbreak \includeonetwocol{}{\hspace{0.2in}} +  \log\Bigg( \frac{\prod_{k \in {\mathcal N}_0} \mathbb{P}\big(T_{k}^{t},\tilde{\tau}_{k}^{t} | \RootLabel = 1\big)}{\prod_{k \in {\mathcal N}_0} \mathbb{P}\big(T_{k}^{t},\tilde{\tau}_{k}^{t} | \RootLabel = 0\big)} \Bigg) \nonumber \\ 
& \overset{(b)}{=} \log\Bigg( \frac{\mathbb{P}\big( N_{0} | \RootLabel =1\big)}{\mathbb{P}\big( N_{0} | \RootLabel =0\big)} \Bigg) + \log\Bigg( \frac{\mathbb{P}\big( \tilde{\tau}_{0} | \RootLabel =1\big)}{\mathbb{P}\big( \tilde{\tau}_{0} | \RootLabel =0\big)} \Bigg) \nonumber \\
& \includeonetwocol{}{\hspace{0.1in}} + \sum_{k \in {\mathcal N}_0} \log\Bigg( \frac{ \sum_{\tau_{k} \in \{0,1\} } \mathbb{P}\big( T_{k}^{t},\tilde{\tau}_{k}^{t} | \tau_{k} \big) \mathbb{P}\big( \tau_{k}|\RootLabel=1 \big) }{\sum_{\tau_{k} \in \{0,1\} }  \mathbb{P}\big( T_{k}^{t},\tilde{\tau}_{k}^{t} | \tau_{k} \big) \mathbb{P}\big( \tau_{k}|\RootLabel=0 \big)} \Bigg) \nonumber  \\ 
& \overset{(c)}{=} -K(p-q) + h_{0} + \sum_{k \in {\mathcal N}_0} \log(\frac{\frac{p}{q}e^{\Gamma_{k}^{t} - \nu}+1}{e^{\Gamma_{k}^{t} - \nu}+1})\label{rec.eq}
\end{align} 
where 
\begin{itemize}
\item $(a)$ holds because conditioned on $\RootLabel$ $(N_{0},\tilde{\tau}_{0})$ are independent of the rest of the tree and also $(T_{k}^{t},\tilde{\tau}_{k}^{t})$ are independent random variables $\forall k \in {\mathcal N}_0$,  
\item $(b)$ holds because conditioned on $\RootLabel$, $N_{0}$ and $\tilde{\tau}_{0}$ are independent, 
\item $(c)$ holds by the definition of $N_{0}$ and $h_{0}$ and because $\tau_{k}$ is Bernoulli-$\frac{Kp}{Kp + (n-K)q}$ if $\RootLabel =1$ and is Bernoulli-$\frac{K}{n}$ if $\RootLabel =0$.
\end{itemize}
\end{proof}

The inference problem defined on the random tree is coupled to the recovering of a hidden community with side information through a coupling lemma~\cite{recovering_one}, which shows that under certain conditions, the neighborhood of a fixed node $i$ in the graph is locally a tree with probability converging to one. Thus, the belief propagation algorithm defined for random trees can be used on the graph as well. The proof of the coupling lemma depends only on the tree structure, implying that it also holds for our system model where the side information is independent of the tree structure given the labels. 

Define $\boldsymbol{G}_{u}^{\TreeIter}$ to be the subgraph containing all nodes that are at a distance at most $\TreeIter$ from node $u$ and define $\boldsymbol{x}_{u}^{\TreeIter}$ and $\boldsymbol{Y}_{u}^{\TreeIter}$ to be the set of labels and side information of all nodes in $\boldsymbol{G}_{u}^{\TreeIter}$, respectively. 

\begin{Lemma}[Coupling Lemma~\cite{recovering_one}]
\label{couple}
Suppose that $\TreeIter(n)$ are positive integers such that  $(2+np)^{\TreeIter(n)} = n^{o(1)}$. Then, for any node $u$ in the graph, there exists a coupling between $(\boldsymbol{G},\boldsymbol{x},\boldsymbol{Y})$ and $(T, \TreeLabels, \TreeSideInfo)$ such that:
\begin{equation}
\label{couple.eq.1}
\mathbb{P}( (\boldsymbol{G}_{u}^{\TreeIter}, \boldsymbol{x}_{u}^{\TreeIter},\boldsymbol{Y}_{u}^{\TreeIter} ) = (T^{\TreeIter}, \TreeLabels^{\TreeIter}, \TreeSideInfo^{\TreeIter}) ) \geq 1 - n^{-1 + o(1)}
\end{equation}
where for convenience of notation, the dependence of $\TreeIter$ on $n$ is made implicit.
\end{Lemma}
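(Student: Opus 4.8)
The plan is to build on the labeled graph-tree coupling of~\cite{recovering_one} and then append the side information with essentially no extra work. First I would run a breadth-first exploration of the neighborhood of node $u$ in $\boldsymbol{G}$, revealing vertices and their labels layer by layer up to distance $\TreeIter$, while simultaneously generating the two-type branching-process tree $T$ with its labels $\TreeLabels$. At each explored vertex the number of not-yet-discovered graph neighbors carrying a given label is a Binomial random variable (with $\approx K$ or $\approx n-K$ trials and success probability $p$ or $q$), whereas in the tree the corresponding number of children is Poisson with the matching mean ($Kp$ or $Kq$ for label-$1$ children and $(n-K)q$ for label-$0$ children). Because $nq,Kq\to\infty$ while $np=n^{o(1)}$ forces $p/q\to1$, each such Binomial is close in total variation to its Poisson limit, so a standard maximal coupling can be used to make the two children counts (and hence the labels produced) agree except on a small per-vertex error event.

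The coupling of $(\boldsymbol{G}_u^{\TreeIter},\boldsymbol{x}_u^{\TreeIter})$ with $(T^{\TreeIter},\TreeLabels^{\TreeIter})$ can fail only through two mechanisms: a Poisson-versus-Binomial mismatch at some explored vertex, or a \emph{collision}, i.e.\ an edge of $\boldsymbol{G}$ joining two already-discovered vertices, which creates a cycle and breaks the tree structure. The total number of vertices reached within distance $\TreeIter$ is at most of order $(2+np)^{\TreeIter}=n^{o(1)}$ by the hypothesis, since the expected branching per vertex is $O(np)$. A union bound over these $n^{o(1)}$ vertices, each contributing a per-vertex error of order (discovered-so-far)$/n = n^{o(1)}/n$, then produces an overall failure probability of $n^{-1+o(1)}$, matching~\eqref{couple.eq.1}. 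I expect the collision bound to be the main obstacle: one must argue that the probability the exploration ever re-encounters a previously seen vertex is at most $n^{-1+o(1)}$, and this is precisely the point at which the growth condition $(2+np)^{\TreeIter}=n^{o(1)}$ on the depth is essential.

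Finally, extending the coupling to side information is routine, and this is the only ingredient genuinely new relative to~\cite{recovering_one}. Conditioned on the labels, the side information $\TreeSideInfo$ in the tree and $\boldsymbol{Y}$ in the graph are both generated i.i.d.\ across nodes from the identical conditional laws $\alpha_{+,m},\alpha_{-,m}$, and in each model this generation is independent of the underlying graph/tree structure given the labels. Hence on the event that the labeled neighborhoods already coincide, $\boldsymbol{x}_u^{\TreeIter}=\TreeLabels^{\TreeIter}$, I would simply couple each node's side information to be equal, so that $\boldsymbol{Y}_u^{\TreeIter}=\TreeSideInfo^{\TreeIter}$ holds deterministically on that event. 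This introduces no additional failure probability, and therefore the bound~\eqref{couple.eq.1} carries over verbatim from the label-only coupling.
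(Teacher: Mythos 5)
Your proposal is correct and follows essentially the same route as the paper: the paper does not reprove the graph--tree coupling but simply invokes it from~\cite{recovering_one}, and its only original content here is exactly your final paragraph --- that since the side information is generated conditionally i.i.d.\ given the labels and independently of the graph/tree structure, it can be coupled on top of the labeled coupling with no additional failure probability. One minor quibble: the Poisson--Binomial total-variation closeness comes from $p,q\to 0$ (so each per-vertex TV error is $O(mp^2)=n^{-1+o(1)}$), not from $p/q\to 1$ as you state, but this does not affect the argument.
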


Now, we are ready to present the belief propagation algorithm for community recovery with bounded side information. Define the message transmitted from node $i$ to its neighboring node $j$ at iteration $t+1$ as:
\begin{align}
\label{BP.1}
R_{i \to j}^{t+1} = h_{i} - K(p-q) + \sum_{k \in {\mathcal N}_i\backslash j} M(R_{k \to i}^{t})
\end{align}
where $h_{i} = \log(\frac{\mathbb{P}(y_{i}|x_{i}=1)}{\mathbb{P}(y_{i}|x_{i}=0)})$, ${\mathcal N}_{i}$ is the set of neighbors of node $i$ and $M(x) = \log(\frac{\frac{p}{q}e^{x-\nu} + 1}{e^{x-\nu}+1})$. The messages are initialized to zero for all nodes $i$, i.e., $R_{i \to j}^{0} = 0$ for all $i \in \{ 1,\cdots,n\}$ and $j \in {\mathcal N}_i$. Define the belief of node $i$ at iteration $t+1$ as:
\begin{align}
\label{BP.2}
R_{i}^{t+1} = h_{i} - K(p-q) + \sum_{k \in {\mathcal N}_i} M(R_{k \to i}^{t})
\end{align}

Algorithm~\ref{BP.One.alg} presents the proposed belief propagation algorithm for community recovery with side information.
\begin{table*}
\caption{Belief propagation algorithm for community recovery with side information.}
\label{BP.One.alg}
\centering
\normalsize
\begin{tabular}{|p{4.5in}|}
\hline
Belief Propagation Algorithm \\
\hline
\begin{enumerate}
    \item Input: $n, K, \TreeDepth\in\mathbb{N}$, $\boldsymbol{G}$ and $\boldsymbol{Y}$.
    \item For all nodes $i$ and $j \in {\mathcal N}_i$, set $R^{0}_{i\to j} = 0$.
    \item For all nodes $i$ and $j \in {\mathcal N}_i$, run $\TreeDepth-1$ iterations of belief propagation as in~\eqref{BP.1}.
    \item For all nodes $i$, compute its belief $R_{i}^{\TreeDepth}$ based on~\eqref{BP.2}.
    \item Output $\tilde{C}=\{ \text{Nodes corresponding to $K$ largest }R_{i}^{\TreeDepth}\}$.
\end{enumerate}
\\
\hline
\end{tabular}
\end{table*}

If in Algorithm~\ref{BP.One.alg} we have $\TreeDepth=\TreeIter(n)$, according to Lemma~\ref{couple} with probability converging to one $R_{i}^{\TreeDepth} = \Gamma_{i}^{\TreeDepth}$, where $\Gamma_{i}^{\TreeDepth}$ was the log-likelihood defined for the random tree. Hence, the performance of Algorithm~\ref{BP.One.alg} is expected to be the same as the MAP estimator defined as $\hat{\tau}_{MAP} = 1_{\{\Gamma_{i}^{\TreeDepth} \geq \nu\}}$, where $\nu = \log(\frac{n-K}{K})$.

 %The only difference is that the MAP estimator decides based on $\Gamma_{i}^{\TreeDepth} \geq \nu$ while Algorithm~\ref{tab} selects the $K$ largest $R_{i}^{\TreeDepth}$. To manage this difference, let $\hat{C}$ define the community recovered by the MAP estimator, i.e. $\hat{C} = \{i: R_{i}^{\TreeDepth} \geq \nu \}$. Since $\tilde{C}$ is the set of nodes with the $K$ largest $R_{i}^{\TreeDepth}$. Then, %either $\tilde{C} \subset \hat{C}$ or $\hat{C} \subset \tilde{C}$. Thus,
%\begin{align}
%| C^{*} \triangle \tilde{C} | & \leq | C^{*} \triangle \hat{C} | + | \hat{C} \triangle \tilde{C} | \nonumber\\ 
%& = | C^{*} \triangle \hat{C} | + | |\hat{C}| - K | \label{fix.est.1} 
%\end{align}
%Moreover,
%\begin{align}
%||\hat{C}| - K | & \leq  ||\hat{C}| - |C^{*}| | + ||C^{*}| - K | \twocolbreak \leq | C^{*} \triangle \hat{C} | + ||C^{*}| - K | \label{fix.est.2}
%\end{align}
%Using~\eqref{fix.est.2} and substituting in~\eqref{fix.est.1}:
%\begin{align}
%| C^{*} \triangle \tilde{C} | & \leq 2 | C^{*} \triangle \hat{C} | + |  |C^{*}| - K|  \label{fix.est}
%\end{align}

We now study the asypmtotic behaviour of $\Gamma_{i}^{\TreeDepth}$. Define for $ t\geq 1$ and any node $i$:
\begin{align}
\psi_{i}^{t} &\triangleq -K(p-q) +  \sum_{j \in {\mathcal N}_i} M(h_{j} + \psi_{j}^{t-1}) \label{psi.1}
\end{align} 
where 
\[
M(x) \triangleq \log\Big(\frac{\frac{p}{q}e^{x-\nu} + 1}{e^{x-\nu}+1}\Big) = \log\Big(1+ \frac{\frac{p}{q}-1}{1 + e^{-(x-\nu)}}\Big).
\]
Then, $\Gamma_{i}^{t+1} = h_{i} + \psi_{i}^{t+1}$ and $\psi_{i}^{0} = 0$ $\forall i \in \CroppedTree$. Let $Z_{0}^{t}$ and $Z_{1}^{t}$ denote random variables drawn according to the distribution of $\psi_{i}^{t}$ conditioned on $x_i= 0$ and $x_i = 1$, respectively. Similarly, let $U_{0}$ and $U_{1}$ denote  random variables drawn according to the distribution of $h_{i}$ conditioned on $\tau_i=0$ and $\tau_i = 1$, respectively.

\begin{Lemma} 
\label{mean_variance_BP}
(\cite[Lemma 11]{Saad:single}) Assume $\lambda$, $\frac{\alpha_{+,m}}{\alpha_{-,m}}$ and $\nu$ are constants independent of $n$ while $nq, Kq \overset{n\rightarrow\infty}{\xrightarrow{\hspace{0.2in}}} \infty$. Then, for all $t \geq 0$:
\begin{align}
\mathbb{E}[Z_{0}^{t+1}] &= \frac{-\lambda}{2}b_{t} + o(1) \label{mean_1} \\
\mathbb{E}[Z_{1}^{t+1}] &= \frac{\lambda}{2}b_{t} + o(1) \label{mean_2}\\
\text{var}(Z_{0}^{t+1}) &= \text{var}(Z_{1}^{t+1}) = \lambda b_{t} + o(1) \label{var_1}
\end{align}
\end{Lemma}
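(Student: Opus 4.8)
The plan is to induct on $t$, turning each step of the recursion~\eqref{psi.1} into a compound-Poisson moment computation and isolating the single cancellation that keeps the otherwise-divergent terms from surviving. First I would use the independent-splitting property of the Poisson already exploited in Lemma~\ref{recursive}: conditioned on $\tau_i=\ell$, the summands of $\psi_i^{t+1}=-K(p-q)+\sum_{k\in{\mathcal N}_i}M(h_k+\psi_k^t)$ break into two \emph{independent} streams---the label-$1$ children form a Poisson sum of rate $Kp$ (if $\ell=1$) or $Kq$ (if $\ell=0$), each term an i.i.d.\ copy of $M(U_1+Z_1^t)$, and the label-$0$ children form a Poisson sum of rate $(n-K)q$ (for either $\ell$), each term an i.i.d.\ copy of $M(U_0+Z_0^t)$. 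For a compound-Poisson sum $S=\sum_{j\le N}X_j$ with $N\sim\mathrm{Poi}(\mu)$ one has $\mathbb{E}[S]=\mu\,\mathbb{E}[X]$ and $\mathrm{var}(S)=\mu\,\mathbb{E}[X^2]$ exactly, the two streams' variances add, and the deterministic shift $-K(p-q)$ affects only the mean. This reduces both claims to rate-weighted combinations of $\mathbb{E}[M(U_\ell+Z_\ell^t)]$ and $\mathbb{E}[M^2(U_\ell+Z_\ell^t)]$.

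Next I would Taylor-expand the message map. Writing $\delta\triangleq\frac{p}{q}-1$ and $s(x)\triangleq\frac{1}{1+e^{-(x-\nu)}}=\frac{K}{K+(n-K)e^{-x}}$, we have $M(x)=\log(1+\delta\,s(x))=\delta\,s(x)-\tfrac12\delta^2 s(x)^2+O(\delta^3)$. The scaling hypotheses force $\delta=\Theta((nq)^{-1/2})\to0$, so that $(\text{rate})\cdot\delta=\Theta(\sqrt{nq})$ \emph{diverges}, while $(\text{rate})\cdot\delta^2=\Theta(1)$ and $(\text{rate})\cdot\delta^3=o(1)$; in particular $Kq\,\delta^2$ and $(n-K)q\,\delta^2$ converge to the constants that assemble $\lambda b_t$. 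The delicate point is that the $\Theta(\sqrt{nq})$ first-order contribution must cancel so that only the $\Theta(1)$ second-order part remains. I would establish this through the likelihood-ratio change of measure used in the proof of Theorem~\ref{Th.1}: since $\Gamma_j^t=h_j+\psi_j^t$ is the exact log-likelihood ratio of the observed subtree, the laws of $U_1+Z_1^t$ and $U_0+Z_0^t$ are related by $\mathbb{E}[g(U_1+Z_1^t)]=\mathbb{E}[g(U_0+Z_0^t)\,e^{\,U_0+Z_0^t}]$ for every test function $g$. Applied with $g=s$ this collapses the $Z_0^{t+1}$ first-order bracket $-K(p-q)+Kq\,\delta\,\mathbb{E}[s(\Gamma_1^t)]+(n-K)q\,\delta\,\mathbb{E}[s(\Gamma_0^t)]$ identically to $0$; the same bracket reappears inside $\mathbb{E}[Z_1^{t+1}]$, the only difference being the extra $(Kp-Kq)\delta=Kq\,\delta^2=\Theta(1)$ piece coming from the larger rate of label-$1$ children.

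With the divergent part gone, I would collect the surviving $\Theta(1)$ terms. For the non-community root this gives $\mathbb{E}[Z_0^{t+1}]=-\tfrac12\delta^2\bigl[Kq\,\mathbb{E}[s(\Gamma_1^t)^2]+(n-K)q\,\mathbb{E}[s(\Gamma_0^t)^2]\bigr]+o(1)$, which is precisely $-\tfrac{\lambda}{2}b_t+o(1)$ once the $n$-dependent constants $\delta^2 Kq$ and $\delta^2(n-K)q$ are absorbed into $b_t$. For the community root the extra term yields $+Kq\,\delta^2\,\mathbb{E}[s(\Gamma_1^t)]-\tfrac{\lambda}{2}b_t+o(1)$, and the pointwise identity $s(x)(1-s(x))e^{x}=\tfrac{n-K}{K}\,s(x)^2$ (a one-line check) together with the same change of measure gives $Kq\,\delta^2\,\mathbb{E}[s(\Gamma_1^t)]=\lambda b_t+o(1)$, so that $\mathbb{E}[Z_1^{t+1}]=+\tfrac{\lambda}{2}b_t+o(1)$. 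For the variances the compound-Poisson formula gives $\mathrm{var}(Z_\ell^{t+1})=\delta^2\bigl[(\text{rate}_1)\mathbb{E}[s(\Gamma_1^t)^2]+(n-K)q\,\mathbb{E}[s(\Gamma_0^t)^2]\bigr]+o(1)$ with no factor $\tfrac12$, and since $Kp\,\delta^2=Kq\,\delta^2+o(1)$ both labels collapse to the common value $\lambda b_t+o(1)$. The induction closes because at each fixed $t$ the means and variances are $\Theta(1)$, keeping the laws of $(Z_0^t,Z_1^t)$ tight and all the moments $\mathbb{E}[s(\Gamma_\ell^t)^k]$ finite and convergent.

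The step I expect to be the main obstacle is exactly the vanishing of the $\Theta(\sqrt{nq})$ first-order bracket: one must expand $M$ to the correct order, track precisely which products of rates and powers of $\delta$ are divergent, $\Theta(1)$, or $o(1)$, and invoke the change of measure so that the leading bracket vanishes \emph{identically} rather than merely to $o(1)$. Everything else---the compound-Poisson moments, the $\Theta(1)$ bookkeeping, and the sign asymmetry between the two labels---is routine once that cancellation is in hand, and the argument parallels~\cite[Lemma 11]{Saad:single} and the proof of Lemma~\ref{Le.4}, departing only through the finite-alphabet law of $U_\ell$.
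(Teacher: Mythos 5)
Your proposal is correct and follows essentially the same route as the paper: the paper itself defers this lemma to \cite[Lemma 11]{Saad:single}, but its in-text proof of the analogous Lemma~\ref{Le.4} for the binary symmetric model uses exactly your strategy --- compound-Poisson moments via independent splitting, a first/second-order expansion of the message map with the divergent $\Theta(\sqrt{nq})$ bracket killed identically by the change of measure $\mathbb{E}[g(\Gamma^t)\,|\,\tau=1]=\mathbb{E}[g(\Gamma^t)e^{\Gamma^t}\,|\,\tau=0]$ (i.e.\ $\mathbb{E}[e^{\Gamma_0^t}]=1$), and a quadratic remainder bound of order $(\mathrm{rate})\cdot\delta^3=o(1)$. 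Your identification of $b_t$ with $\mathbb{E}\big[(e^{-\nu}+e^{-\Gamma_1^t})^{-1}\big]$, consistent with the recursion $v_{t+1}=\lambda b_t$ of Lemma~\ref{Gaussian}, and the change-of-measure identity equating the $\mathbb{E}[s^2]$ combination to $K\,\mathbb{E}[s(\Gamma_1^t)]$ are exactly the bookkeeping needed to land on $\pm\tfrac{\lambda}{2}b_t$ and $\lambda b_t$.
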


% \begin{proof}
% The proof is presented in~\cite{Saad:single}.% Appendix~\ref{BP-App.6} .
% \end{proof}

The following lemma shows that the distributions of $Z_{1}^{t}$ and $Z_{0}^{t}$ are asymptotically Gaussian. 

\begin{Lemma}
\label{Gaussian}
(\cite[Lemma 12]{Saad:single}) Assume $\lambda$, $\frac{\alpha_{+,m}}{\alpha_{-,m}}$ and $\nu$ are constants independent of $n$ while $nq, Kq \overset{n\rightarrow\infty}{\xrightarrow{\hspace{0.2in}}} \infty$. Let $\phi(x)$ be the cumulative distribution function (CDF) of a standard normal distribution. Define $v_{0} =0$ and $v_{t+1} =\lambda \mathbb{E}_{Z,U_{1}}[\frac{1}{e^{-\nu} + e^{-(\frac{v_{t}}{2} +\sqrt{v_{t}}Z)-U_{1}}}]$, where $Z \sim \mathcal{N}(0,1)$. Then, for all $t\geq 0$:
\begin{align}
&\sup_{x} \big| \mathbb{P}\big( \frac{Z_{0}^{t+1} + \frac{v_{t+1}}{2} }{\sqrt{v_{t+1}}} \leq x \big) - \phi(x) \big|  \to 0 \label{gauss.1} \\
&\sup_{x} \big| \mathbb{P}\big( \frac{Z_{1}^{t+1} - \frac{v_{t+1}}{2} }{\sqrt{v_{t+1}}} \leq x \big) - \phi(x) \big|  \to 0 \label{gauss.2}
\end{align}
\end{Lemma}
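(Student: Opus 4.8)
The plan is to argue by induction on $t$, treating $\psi_i^{t+1}$ as a sum of a random (Poisson) number of independent increments and invoking a central limit theorem for such sums, with the first two moments already supplied by Lemma~\ref{mean_variance_BP}. Conditioned on $x_i$, the children of node $i$ split into those with label $1$ and those with label $0$, whose counts are independent Poisson random variables; given the label of a child $j$, the summand $M(h_j+\psi_j^{t})$ is distributed as $M(U_1+Z_1^t)$ or $M(U_0+Z_0^t)$ respectively and is independent across children. Thus $\psi_i^{t+1}+K(p-q)$ is a sum of two independent compound-Poisson contributions, each with i.i.d.\ increments, and since $nq,Kq\to\infty$ the number of increments diverges, placing us squarely in the CLT regime.

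For the base case $t=0$ we have $\psi_j^{0}=0$, so the increments are the bounded i.i.d.\ variables $M(U_1)$ and $M(U_0)$, and a compound-Poisson CLT gives asymptotic normality of $\psi_i^1$ with mean and variance matching \eqref{mean_1}--\eqref{var_1}. For the inductive step I would assume \eqref{gauss.1}--\eqref{gauss.2} at level $t$, so that $Z_1^t$ and $Z_0^t$ are approximately $\mathcal{N}(\pm v_t/2,\,v_t)$, and feed these laws through the increment $M(\cdot)$. The recursion for $v_{t+1}$ then arises from computing the variance of the compound sum: because $\frac{p}{q}\to1$, one has $M(x)=\log\big(1+(\frac{p}{q}-1)f(x)\big)$ with $f(x)=\frac{1}{1+e^{-(x-\nu)}}$, so the per-increment second moment scales like $(\frac{p}{q}-1)^2$, which when multiplied by the Poisson mean of the children and combined with $\lambda=\frac{K^2(p-q)^2}{(n-K)q}$ yields precisely $v_{t+1}=\lambda\,\mathbb{E}_{Z,U_1}\big[\tfrac{1}{e^{-\nu}+e^{-(v_t/2+\sqrt{v_t}Z)-U_1}}\big]$ after substituting the inductive Gaussian law for $Z_1^t$.

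The main obstacle is that the conclusion is stated in Kolmogorov distance ($\sup_x$), so weak convergence does not suffice and a Berry--Esseen-type rate is required. The decisive structural fact is that each increment is uniformly small: since $f(x)\in(0,1)$ and $\frac{p}{q}\to1$, one has $|M(x)|=O(\frac{p}{q}-1)\to0$ uniformly in $x$, while the total variance stays $\Theta(1)$ by Lemma~\ref{mean_variance_BP} and the number of increments (a Poisson mean of order $nq$) diverges. The Lindeberg condition is therefore satisfied trivially, and the corresponding compound-Poisson Berry--Esseen bound decays like the reciprocal square root of the number of increments, forcing the $\sup_x$ discrepancy to zero.

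A final point to handle is that the inductive hypothesis only provides \emph{approximate} Gaussianity of $Z_j^t$, not exact; I must therefore check that the $o(1)$ error in the law of $Z_j^t$ is not amplified after passing through $M$ and summing. This is controlled because $M$ is Lipschitz with a small constant of order $(\frac{p}{q}-1)$, so the mean and variance computed against the approximate law differ from those computed against the exact Gaussian by $o(1)$, and these errors are contracted rather than magnified. Since this statement and its two predecessors differ from the corresponding results of~\cite{Saad:single} only in that the side-information increments $U_0,U_1$ now take values in a general finite alphabet, the moment and Lipschitz bounds above depend on $U_0,U_1$ solely through their bounded range, so the cited argument transfers verbatim.
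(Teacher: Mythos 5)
Your proposal is essentially the paper's approach: the paper gives no proof of this lemma at all, importing it verbatim from \cite[Lemma 12]{Saad:single}, and the argument used there (and in the analogous \cite[Lemma 7.3]{Main_paper2}) is exactly the compound-Poisson decomposition with uniformly small increments $M(\cdot)=O(\tfrac{p}{q}-1)$, moments supplied by the preceding lemma, and a Berry--Esseen/Lindeberg bound whose error decays like the reciprocal square root of the Poisson mean $nq\to\infty$. Your closing observation --- that the only change from the cited result is the finite-alphabet law of $U_0,U_1$, which enters the moment and Lipschitz estimates only through boundedness --- is precisely the justification the paper relies on, so the sketch is consistent and correct.
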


% \begin{proof}
% The proof is presented in~\cite{Saad:single}.%See Appendix~\ref{BP-App.7}.
% \end{proof}

The following lemma characterizes the asymptotic residual error of belief propagation with side information for recovering a single community.

\begin{Lemma}
\label{MAP_unbounded}
Assume $\lambda$, $\frac{\alpha_{+,m}}{\alpha_{-,m}}$ and $\nu$ are constants independent of $n$ while $nq, Kq \overset{n\rightarrow\infty}{\xrightarrow{\hspace{0.2in}}} \infty$.  Let $\hat{C}$ define the community recovered by the MAP estimator, i.e. $\hat{C} = \{i: \Gamma_{i}^{t} \geq \nu \}$. Then,
\begin{align}
\lim_{nq,Kq \to \infty} \lim_{n\to\infty} \frac{\mathbb{E}[\hat{C} \triangle C^{*}]}{K} &= \frac{n-K}{K} \mathbb{E}_{U_{0}}[Q(\frac{\nu+\frac{v_{t}}{2} - U_{0}}{\sqrt{v_{t}}})] \twocolbreak + \mathbb{E}_{U_{1}}[Q(\frac{-\nu+\frac{v_{t}}{2} + U_{1}}{\sqrt{v_{t}}})] \label{main.one.SBM}
\end{align}
where $v_{0} =0$ and $v_{t+1} =\lambda \mathbb{E}_{Z,U_{1}}[\frac{1}{e^{-\nu} + e^{-(\frac{v_{t}}{2} +\sqrt{v_{t}}Z)-U_{1}}}]$, and $Z \sim \mathcal{N}(0,1)$.
\end{Lemma}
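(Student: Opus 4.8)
The plan is to write the expected symmetric difference as a sum of per-node misclassification probabilities, and then reduce each such probability to a Gaussian tail integral using the limiting distributions of Lemma~\ref{Gaussian}.

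First I would observe that $\hat C\triangle C^*$ is the disjoint union of two error types: nodes with true label $x_i=0$ that are erroneously placed in $\hat C$ (those with $\Gamma_i^t\geq\nu$), and nodes with $x_i=1$ that are erroneously excluded from $\hat C$ (those with $\Gamma_i^t<\nu$). Since exactly $K$ nodes carry label $1$ and $n-K$ carry label $0$, and by the tree construction all nodes of a given label share the same marginal misclassification probability, linearity of expectation gives
\begin{align}
\frac{\mathbb{E}[|\hat C\triangle C^*|]}{K}=\frac{n-K}{K}\,\mathbb{P}(\Gamma_i^t\geq\nu\mid x_i=0)+\mathbb{P}(\Gamma_i^t<\nu\mid x_i=1).\nonumber
\end{align}

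Next I would apply the decomposition $\Gamma_i^t=h_i+\psi_i^t$. Because the side information $\tilde\tau_i$ is generated independently of the tree structure and of all descendant side information given the root label, conditioning on $x_i$ makes the root's own LLR $h_i$ independent of $\psi_i^t$; hence under $x_i=0$ we have $h_i\sim U_0$ and $\psi_i^t\sim Z_0^t$ independently, and symmetrically under $x_i=1$. Conditioning on the value of $U_0$ (resp.\ $U_1$), each probability becomes an expectation of a tail probability of $Z_0^t$ (resp.\ $Z_1^t$). I would then invoke Lemma~\ref{Gaussian}, which establishes $Z_0^t\to\mathcal N(-v_t/2,v_t)$ and $Z_1^t\to\mathcal N(v_t/2,v_t)$ in Kolmogorov distance, to replace these tails by standard normal tails. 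Standardizing the thresholds and using $Q(x)=1-\phi(x)=\phi(-x)$ then yields $\mathbb{P}(\Gamma_i^t\geq\nu\mid x_i=0)\to\mathbb{E}_{U_0}[Q((\nu+\tfrac{v_t}{2}-U_0)/\sqrt{v_t})]$ and $\mathbb{P}(\Gamma_i^t<\nu\mid x_i=1)\to\mathbb{E}_{U_1}[Q((-\nu+\tfrac{v_t}{2}+U_1)/\sqrt{v_t})]$, which are exactly the two terms in the claim.

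The step I expect to be the main obstacle is justifying the interchange of the limit (in which the Gaussian approximation becomes exact) with the expectation over the side-information LLR. Here I would rely on the assumption that the alphabet cardinality $M$ is bounded and constant in $n$: then $U_0$ and $U_1$ take finitely many values, each expectation $\mathbb{E}_{U_0}[\cdot]$ and $\mathbb{E}_{U_1}[\cdot]$ is a finite weighted sum, and the pointwise convergence of each atom---guaranteed by the uniform-in-$x$ (Kolmogorov) convergence of Lemma~\ref{Gaussian}---suffices to pass to the limit term by term. Finally, I would check that the prefactor $\tfrac{n-K}{K}$ multiplies a $Q$-function whose argument is of order $\nu/\sqrt{v_t}$, so that in the regime $nq,Kq\to\infty$ where $v_t$ converges the first term stays finite and the stated iterated limit is well defined.
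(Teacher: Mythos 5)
Your proposal is correct and follows essentially the same route as the paper: the paper's (very terse) proof likewise writes $\frac{\mathbb{E}[\hat C\triangle C^*]}{K}=\frac{n}{K}p_e^t=\frac{n-K}{K}p_{e,0}+p_{e,1}$ and then invokes Lemmas~\ref{mean_variance_BP} and~\ref{Gaussian} to convert the two conditional error probabilities into the Gaussian tail expressions. Your additional care with the conditional independence of $h_i$ and $\psi_i^t$ given the root label, and with passing the Kolmogorov-distance convergence through the finite expectation over $U_0,U_1$, simply fills in details the paper leaves implicit.
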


\begin{proof}
Let $p_{e,0}, p_{e,1}$ denote Type I and Type II errors for recovering $\tau_0$. Then, the proof follows from Lemmas~\ref{mean_variance_BP} and~\ref{Gaussian}, using
\[
\frac{\mathbb{E}[\hat{C} \triangle C^{*}]}{K} = \frac{n}{K} p_{e}^{t}  = \frac{n-K}{K}  p_{e,0} + p_{e,1}.
\]
\end{proof}

%%=====================================================================================================

\subsection{Exit Analysis}\label{EX.1}

An interesting and natural question is: does belief propagation with side information have a phase transition? If yes, what is the threshold? Equation~\eqref{main.one.SBM} shows the residual asymptotic error of belief propagation for detecting one community with side information. However, it does not provide a direct answer regarding phase transition. This section demonstrates the utility of EXIT charts in the understanding of phase transition.

We begin by calculating the mutual information between the label of node $i$, $x_{i}$, and its belief at time $t$, $R_{i}^{t}$ as follows:
\begin{align}
I&(x_{i},R_{i}^{t}) \nonumber \\
= & -\frac{K}{n}\log(\frac{K}{n}) - (1-\frac{K}{n})\log(1-\frac{K}{n}) - H(x_{i}|R_{i}^{t}) \nonumber \\
= & -\frac{K}{n}\log(\frac{K}{n}) - (1-\frac{K}{n})\log(1-\frac{K}{n})    \nonumber \\
&-\frac{K}{n} \int_{-\infty}^{\infty} \Bigg(\sum_{m=1}^{M} \alpha_{+,m} \frac{e^{\frac{-(y-(v_{t}+ h_{m}))^{2}}{2v_{t}}}}{\sqrt{2\pi v_{t}}}\Bigg) \twocolbreak{} \log_{2}\bigg(1 + \frac{(n-K)\sum_{m=1}^{M} \alpha_{-,m} e^{\frac{-(y-(-v_{t}+ h_{m}))^{2}}{2v_{t}}}}{K \sum_{m=1}^{M} \alpha_{+,m} e^{\frac{-(y-(v_{t}+ h_{m}))^{2}}{2v_{t}}}} \bigg) dy   \nonumber \\
& -\frac{n-K}{n} \int_{-\infty}^{\infty} \Bigg(\sum_{m=1}^{M} \alpha_{-,m} \frac{e^{\frac{-(y-(-v_{t}+ h_{m}))^{2}}{2v_{t}}}}{\sqrt{2\pi v_{t}}}\Bigg) \twocolbreak{} \log_{2}\Bigg(1 + \frac{K \sum_{m=1}^{M} \alpha_{+,m} e^{\frac{-(y-(v_{t}+ h_{m}))^{2}}{2v_{t}}}}{(n-K) \sum_{m=1}^{M} \alpha_{-,m} e^{\frac{-(y-(-v_{t}+ h_{m}))^{2}}{2v_{t}}}} \Bigg) dy  \label{main.Mutual.One.Comm.}
\end{align}
where $h_m = \log(\frac{u_{+,m}}{u_{-,m}})$. 

For a concrete demonstration of the capabilities of EXIT analysis, we use the following model for side information. Let $M=2$, where for each node $i$, $y_{i} = x_{i}$ with probability $1-\alpha$, and $y_i = 1-x_i$ with probability $\alpha$, where $\alpha \in [0,0.5]$.  
Note that for a fixed $\alpha$, $I(x_{i},R_{i}^{t})$ is function of $v_{t}$ only. Hence, we will denote it by $J(v_{t})$.

\begin{figure*}
\centering
\begin{minipage}{3.209in}
\centering
\includegraphics[width=3.209in]{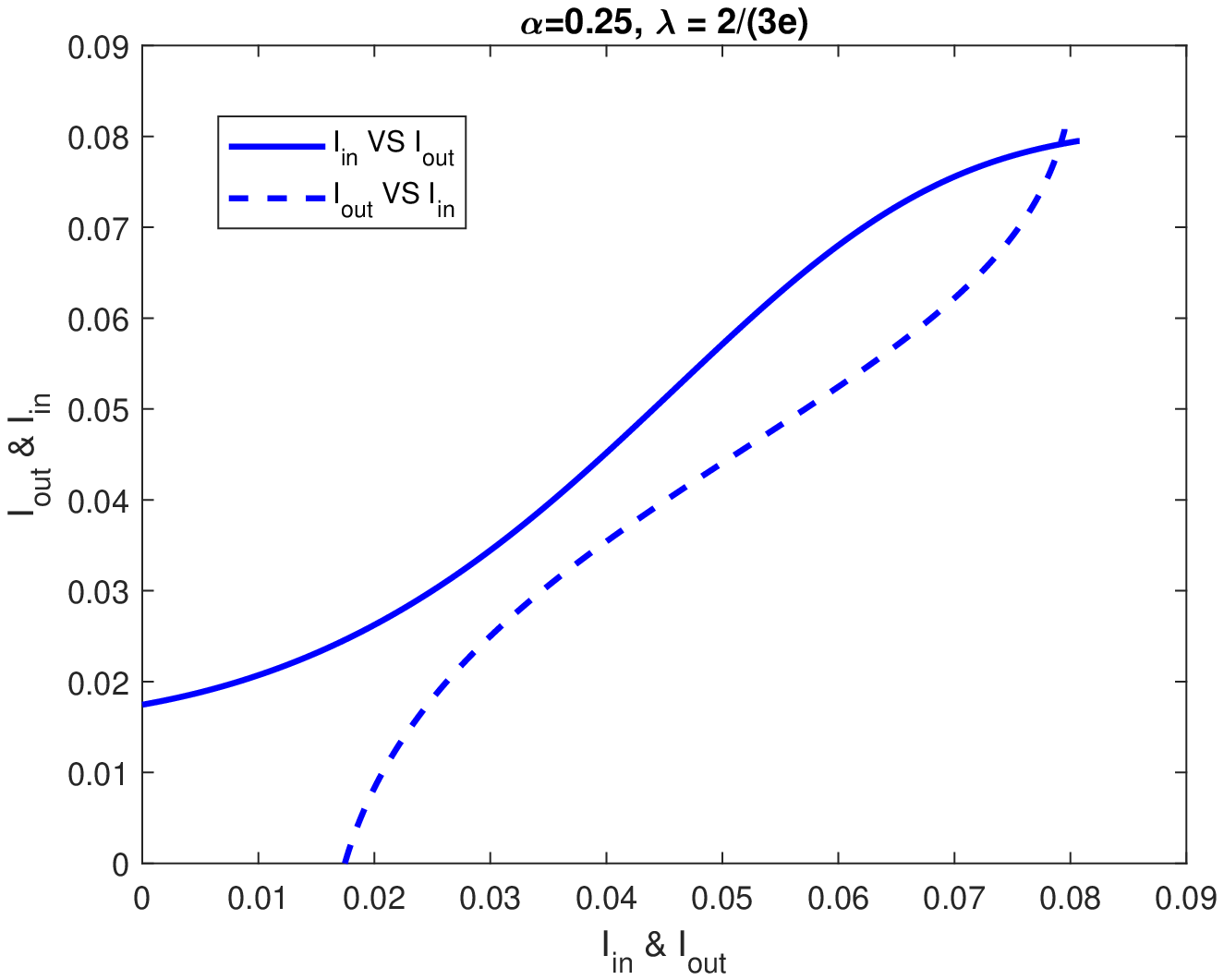}\\
%\footnotesize{\em Observed Graph} 
\end{minipage}
 %{$\displaystyle \xrightarrow[\text{detection}]{\text{community}}$}
\begin{minipage}{3.209in}
\centering 
\includegraphics[width=3.209in]{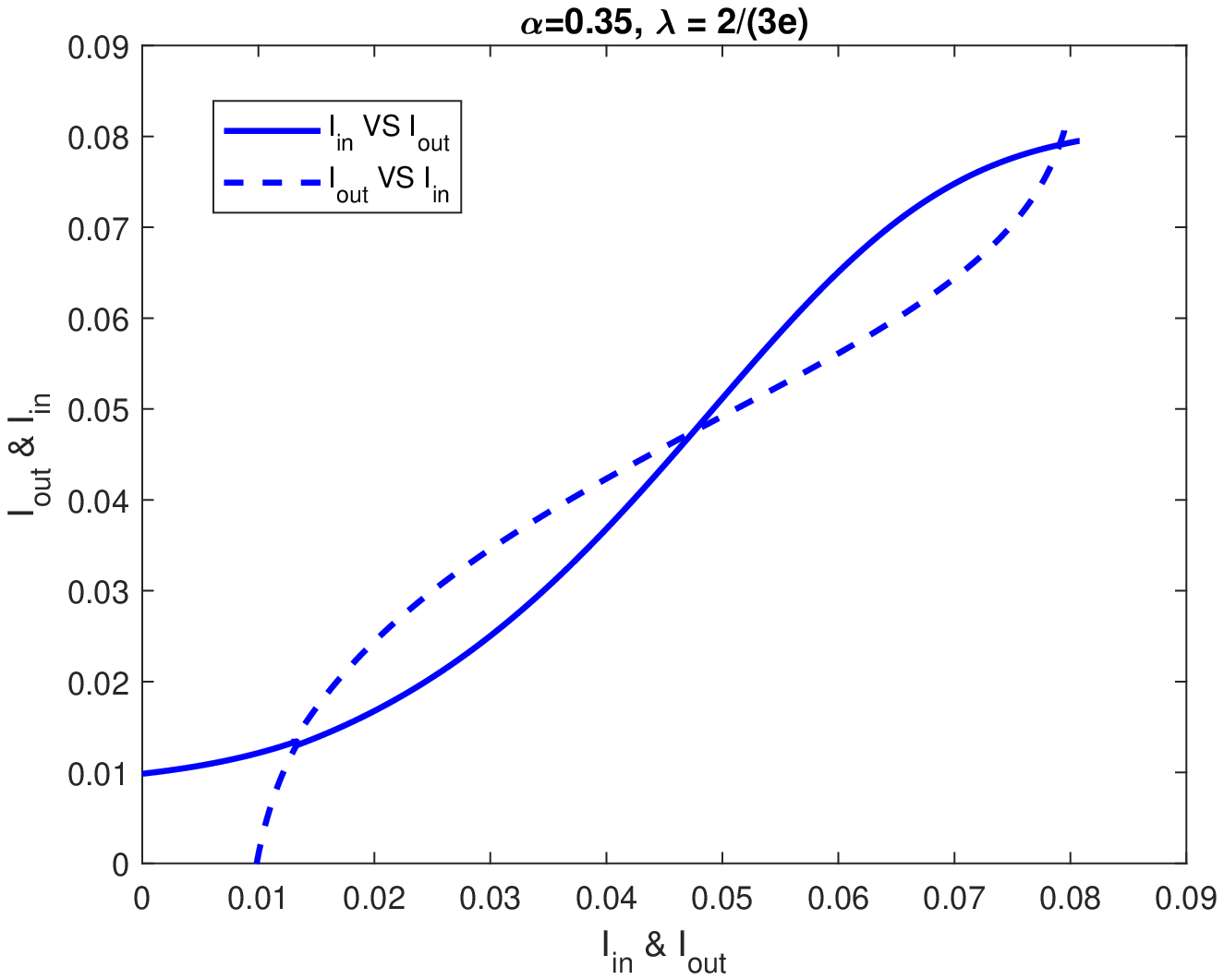}\\
%\footnotesize{\em Detected communities}
\end{minipage}
\\[0.2in]
\centering
\begin{minipage}{3.209in}
\centering
\includegraphics[width=3.209in]{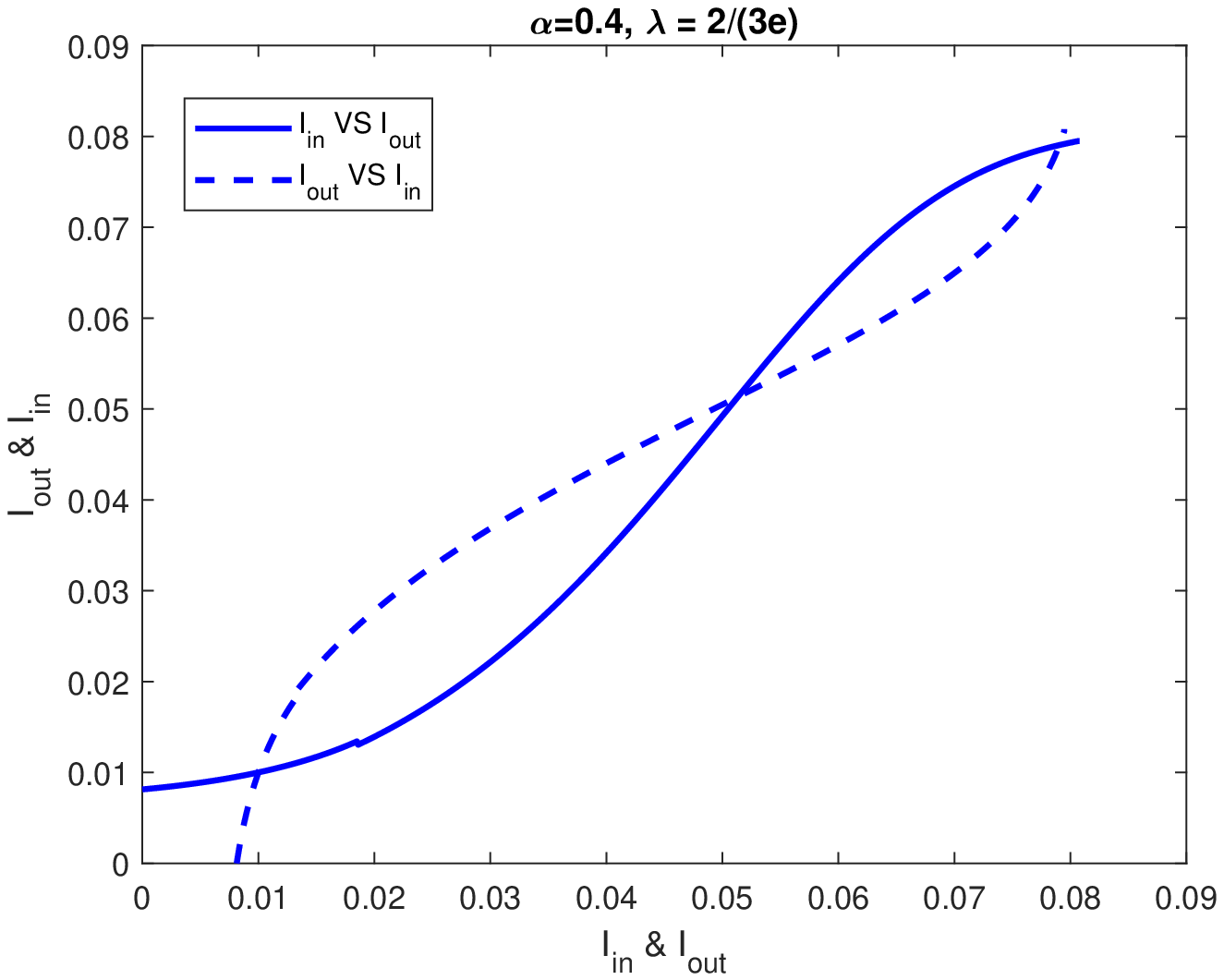}\\
%\footnotesize{\em Graph + side information}
\end{minipage}
%{$\displaystyle \xrightarrow[\text{detection}]{\text{community}}$}
\begin{minipage}{3.209in}  
\centering 
\includegraphics[width=3.209in]{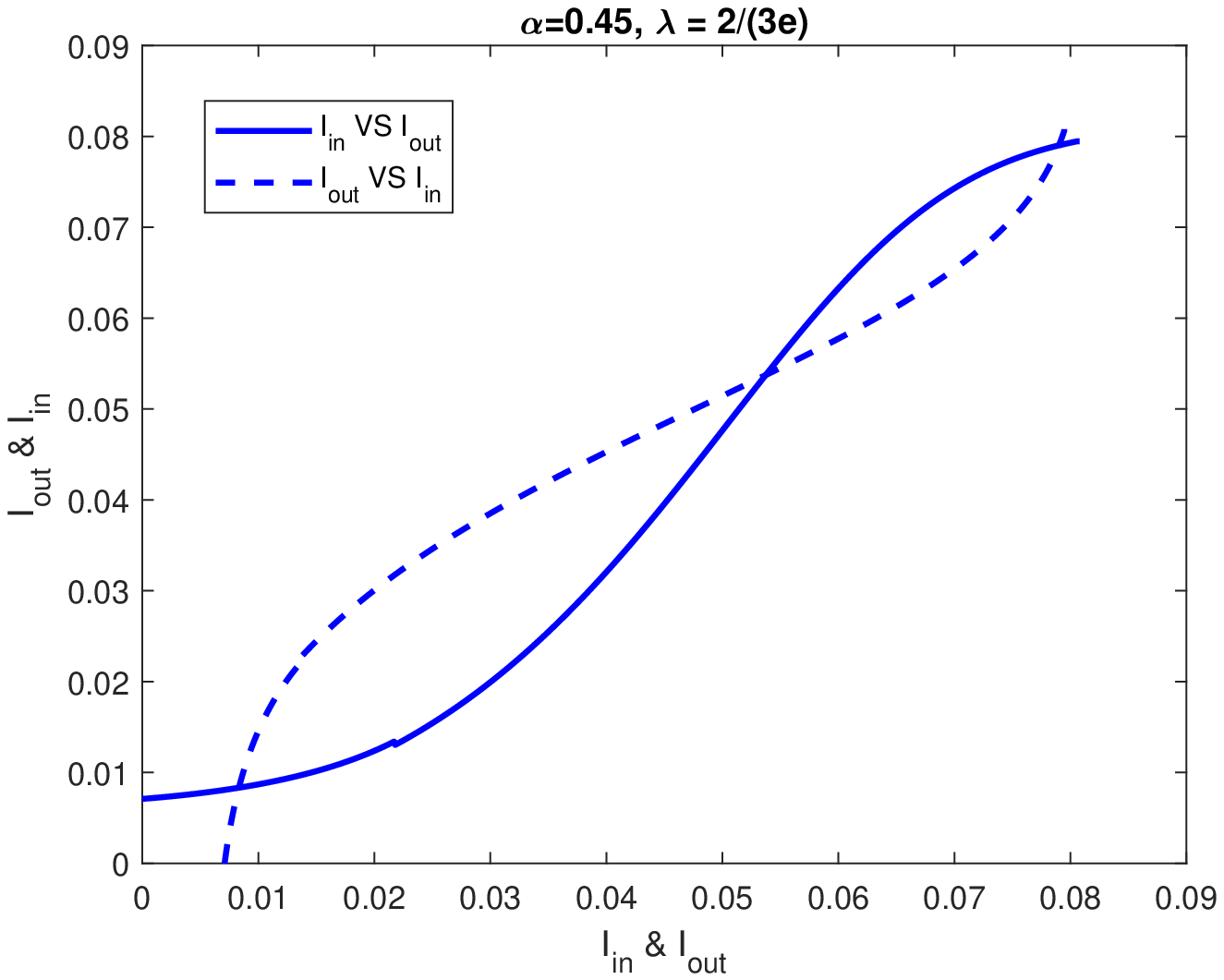}\\
%\footnotesize{\em Enhanced detection}
\end{minipage}
\caption{EXIT charts for one community detection with $\lambda = \frac{2}{3e}$ for different values of $\alpha$.} 
\label{fig1.0}
\end{figure*}

Based on the belief propagation algorithm described in Table~\ref{BP.One.alg}, at iteration $t$, node $i$ receives the beliefs of all nodes $j \in \mathcal{N}_i$ calculated at iteration $(t-1)$. We denote the input information to node $i$ from node $j$ as $I_{in}$. Then, node $i$ computes the new information it has at iteration $t$, which we call $I_{out}$. Note that $I_{in}$ and $I_{out}$ can be calculated using~\eqref{main.Mutual.One.Comm.} as $J(v_{t-1})$ and $J(v_{t})$, respectively. Since $J(v_{t})$ is monotonically increasing in $v_{t}$~\cite{exit}, $J(v_{t})$ is reversible. Thus, $v_{t} = J^{-1}(I(x_{i},R_{i}^{t}))$.  Moreover, since $v_{t-1}$ and $v_{t}$ are related by: 
\[v_{t} =\lambda \mathbb{E}_{Z,U_{1}}\bigg[\frac{1}{e^{-\nu} + e^{-(\frac{v_{t-1}}{2} +\sqrt{v_{t-1}}Z)-U_{1}}}\bigg],
\]
we can define the relation between $I_{in}$ and $I_{out}$ for node i as follows:
%$\nu_{t+1} = \frac{\mu^{2}}{4}h(\nu_{t})$
\begin{align}
\twocolAlignMarker I_{out} =  \twocolnewline &  J\Bigg(\lambda \bigg[ \alpha \mathbb{E}_{Z}\Big[\Big(e^{-\nu} + e^{-(\frac{J^{-1}(I_{in})}{2} +\sqrt{J^{-1}(I_{in})}Z)-\log(\frac{\alpha}{1-\alpha}) }\Big)^{-1}\Big] + \nonumber \\
	& (1-\alpha)\mathbb{E}_{Z}\Big[\Big(e^{-\nu} + e^{-(\frac{J^{-1}(I_{in})}{2} +\sqrt{J^{-1}(I_{in})}Z)-\log(\frac{1-\alpha}{\alpha}) }\Big)^{-1}\Big] \bigg] \Bigg) 
\end{align}
To compute $J$ and $J^{-1}$, we apply curve fitting using the Levenberg Marquardt algorithm~\cite{exit}.
 %to obtain:
%\begin{equation}\label{eq.J.one}
%J(x) \approx
%\begin{cases}
%-0.202x^{3}+ 0.043x^{2} + 0.656x ,& 0 \leq x  \leq 0.5\\
%1-e^{0.001x^{2} - 0.522x -0.221\sqrt{x} + 0.059},&  0.5 < x  \leq 10\\
%1,& x > 15
%\end{cases}
%\end{equation}
%\begin{equation}\label{eq.Jinv.one}
%J^{-1}(x) \approx
%\begin{cases}
%3.849x^{2} -0.822x + 0.529\sqrt{x},&  0 \leq x  \leq 0.7\\
%-1.876\log[0.804(1-x)] - 1.229x,&  0.7 < x  < 1
%\end{cases}
%\end{equation}

Figures~\ref{fig1.0},~\ref{fig1.1} and~\ref{fig2.1} show the EXIT curves for different values of $\lambda$, and $\alpha$. %From these figures, we can see the following:
\begin{itemize}
\item Figure~\ref{fig1.0} shows a threshold for $\lambda$, such that the EXIT curves do not intersect above this threshold and they do below the threshold. Hence, belief propagation with side information experiences a phase transition.  Moreover, above the threshold, the maximum mutual information is attained, and hence, a vanishing residual error is possible (weak recovery). This particular example is constructed for a graph whose probability distribution does not provide sufficient information {\em alone} for weak recovery. This example demonstrates clearly the role of side information in weak recovery especially in conditions where, without it, weak recovery is not attainable. EXIT analysis thus confirms the threshold effect that was first reported in~\cite{Saad:single}, but more importantly, EXIT demonstrates the phase transition behavior in a visually compelling manner that is easy to grasp, with relatively straight forward calculations. 

\item
To elaborate, EXIT charts bring further clarity to the nature of the belief propagation threshold, by showing how the iterations of the belief propagation, at threshold, just barely manage to escape through a bottleneck and approach the maximum likelihood solution. EXIT also clearly demonstrates the residual error of belief propagation on the two sides of the phase transition (the jump in error probability at phase transition) which is not as easy to see via other analytical methods. 

\item
Thus, the EXIT method demonstrates that while the thresholding phenomenon for belief propagation is indeed sharp in terms of transition across parameters of the model for the graph and side information, however, close to the threshold the belief propagation might pay a heavy price in terms of the number of iterations needed to converge. Thus, in the sense of the cost of the algorithm, the behavior of belief propagation near the threshold is something that is especially well understood via the EXIT analysis. The curvature (second derivative) of the EXIT curves at the point of bottleneck is an indication of the iterations needed close to the threshold. This effect is not visible to the other analytical methods that, typically, first let the number of iterations go to infinity, and then observe the (asymptotic, in iterations) performance of the belief propagation algorithm across the landscape of the parameters of the system model.

\item
As mentioned earlier, Fig.~\ref{fig1.0} shows the thresholding effect for the side information where the graphical information is fixed. In order to complete the picture, we also performed experiments where we hold the quality of the side information to be fixed (via a fixed $\alpha$), while we allow the graph to become progressively more informative (characterized by improving $\lambda$). This result is shown in Figures~\ref{fig1.1} and~\ref{fig2.1}. In these figures, the thresholding effect for graphical information is shown in the presence of side information.

%\item Additionally, one can deduce or approximate the new threshold from the figures. \textbf{To be Added}.   

%From the figures, this  threshold can be deduced to be $\frac{\alpha\alpha^2}{1-\alpha} + \frac{}{}$.

%\item One interesting insight EXIT charts provide is the performance of belief propagation close to the threshold. Figures~\ref{fig1.1},~\ref{fig2.1} show that slightly above the threshold, the curves are still close, which indicates that belief propagation will need more iterations to pass this bottleneck. The figures also show that the farther $\lambda$ is from the threshold, the easier it is for belief propation to pass through the iteration bottleneck.

\end{itemize}

\begin{figure}
\centering
\includegraphics[width=\Figwidth]{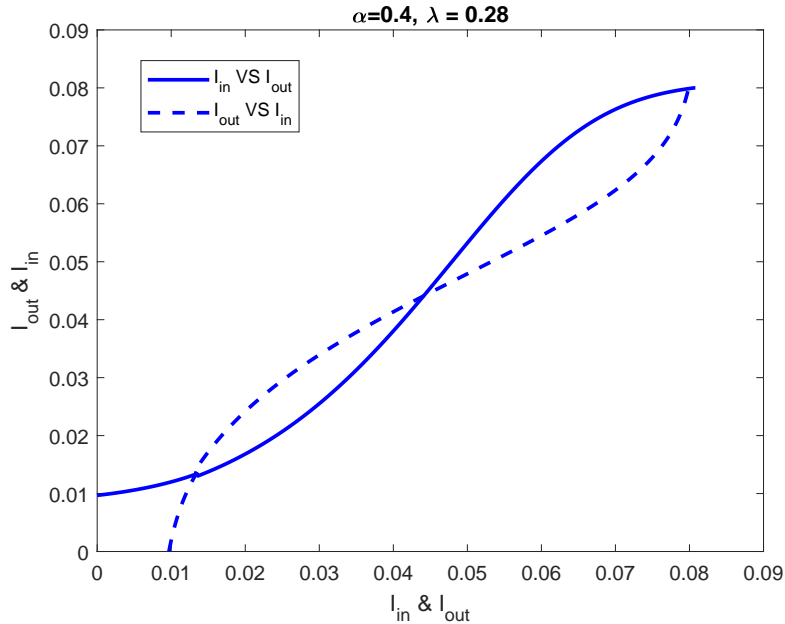}
\caption{EXIT Chart for one community detection with $\alpha = 0.4$.}
\label{fig1.1}
\end{figure}

\begin{figure}
\centering
\includegraphics[width=\Figwidth]{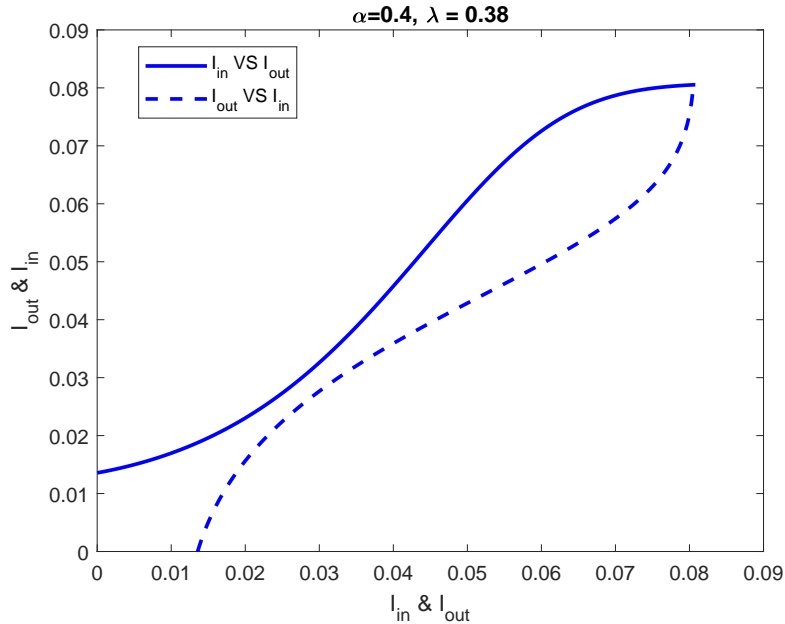}
\caption{EXIT Chart for one community detection with $\alpha = 0.4$.}
\label{fig2.1}
\end{figure}

%
%\begin{figure}
%\centering
%\includegraphics[width=0.65\textwidth]{figures/EXIT_one_community_beta_0.25_scale_lam_2_3.eps}
%%\caption{EXIT Chart for $\mu =2$.}
%\label{fig1.0}
%\end{figure}
%
%
%\begin{figure}
%\centering
%\includegraphics[width=0.65\textwidth]{figures/EXIT_one_community_beta_0.35_scale_lam_2_3.eps}
%%\caption{EXIT Chart for $\mu =2$.}
%\label{fig1.01}
%\end{figure}
%
%
%\begin{figure}
%\centering
%\includegraphics[width=0.65\textwidth]{figures/EXIT_one_community_beta_0.45_scale_lam_2_3.eps}
%%\caption{EXIT Chart for $\mu =2$.}
%\label{fig1.02}
%\end{figure}

%%=====================================================================================================

\section{Conclusion}
\label{con.}

This paper proposes the extrinsic information transfer (EXIT) method to study the performance of belief propagation in community detection under side information. The EXIT technique was introduced originally for the analysis of turbo codes and LDPC codes. We consider the stochastic block model for one community and two symmetric communities. For single community detection, we demonstrate the suitability of EXIT analysis to predict whether belief propagation experiences a phase transition, and where is it. Furthermore, we show the power of EXIT analysis to produce insights that are not easily available otherwise, such as the performance and complexity of belief propagation near the threshold. For the two symmetric communities, we calculate the asymptotic residual error for belief propagation with finite-alphabet side information, generalizing a result in the literature. This work shows that EXIT analysis can illuminate  the effect of quality and quantity of side information on the performance of belief propagation in terms of residual error, performance through the first few iterations, and achieving weak recovery. 

%%=====================================================================================================

\bibliographystyle{IEEEtran}
\bibliography{ISIT2016}

\end{document}